\pdfoutput=1 

\documentclass[sigconf, nonacm]{acmart}

\usepackage[nolist,nohyperlinks]{acronym}
\usepackage{algorithm}
\usepackage[capitalise]{cleveref}
\usepackage{algpseudocode}
\usepackage[binary-units=true]{siunitx}
\usepackage{amsthm}

\newcommand*{\symDefine}[2]{\newcommand{{#1}}{{#2}}}

\symDefine{\symServiceA}{A}
\symDefine{\symServiceB}{B}
\symDefine{\symBase}{\alpha}
\symDefine{\symDownSamplingFunction}{d}
\symDefine{\symInterval}{I}
\symDefine{\symIndexI}{i}
\symDefine{\symIndexJ}{j}
\symDefine{\symIndexK}{k}
\symDefine{\symNumSampleRates}{m}
\symDefine{\symNumSampleRatesInSample}{n}
\symDefine{\symQuantity}{q}
\symDefine{\symQuantityEstimator}{Q}
\symDefine{\symNaiveQuantityEstimator}{\symQuantityEstimator_{\mathrm{naive}}}
\symDefine{\symNewQuantityEstimator}{\symQuantityEstimator_{\mathrm{new}}}
\symDefine{\symRandom}{R}
\symDefine{\symSampleRateFunction}{r}
\symDefine{\symRateLimit}{f_\mathrm{max}}
\symDefine{\symSpan}{s}
\symDefine{\symSpanA}{\symSpan_\mathrm{a}}
\symDefine{\symSpanB}{\symSpan_\mathrm{b}}
\symDefine{\symDeltaTime}{\Delta T}
\symDefine{\symSampleRate}{u}
\symDefine{\symSampleRateInSample}{v}
\symDefine{\symSumQuantity}{w}
\symDefine{\symSumQuantityEstimate}{W}
\symDefine{\symSetOfSpans}{\mathcal{X}}
\symDefine{\symZ}{z}
\symDefine{\symSetOfSampledSpans}{\mathcal{Y}}
\symDefine{\symOtherSet}{\mathcal{A}}

\symDefine{\symDesiredSampleRate}{\rho}

\DeclareMathOperator*{\symVariance}{Var}
\DeclareMathOperator*{\symExpectation}{\mathbb{E}}
\DeclareMathOperator*{\symBigO}{\mathcal{O}}
\DeclareMathOperator*{\symProbability}{Pr}

\allowdisplaybreaks
\sloppy

\begin{document}
\title{Estimation from Partially Sampled Distributed Traces} 

\author{Otmar Ertl}
\affiliation{%
  \institution{Dynatrace Research}
  \city{Linz}
  \state{Austria}
}
\email{otmar.ertl@dynatrace.com}

\begin{abstract}
Sampling is often a necessary evil to reduce the processing and storage costs of distributed tracing. In this work, we describe a scalable and adaptive sampling approach that can preserve events of interest better than the widely used head-based sampling approach. Sampling rates can be chosen individually and independently for every span, allowing to take span attributes and local resource constraints into account. The resulting traces are often only partially and not completely sampled which complicates statistical analysis. To exploit the given information, an unbiased estimation algorithm is presented. Even though it does not need to know whether the traces are complete, it reduces the estimation error in many cases compared to considering only complete traces.
\end{abstract}

\maketitle

\pagestyle{plain}

\section{Introduction}
Distributed tracing \cite{Shkuro2019, Parker2020, Kaldor2017, Sigelman2010, Greifeneder2012} has become an important tool for monitoring and failure diagnosis of large distributed systems. 
It measures the behavior and the performance of individual operations also called spans. By propagating parent span contexts including a unique trace identifier, that is generated at the endpoint (the entry point into the monitored system), to child spans allows to derive the hierarchical causal relationships between collected spans. 
In this way, the set of collected spans sharing the same trace identifier will form a directed acyclic graph, which is called a trace. These traces allow to see, find, and understand far-reaching dependencies across process boundaries and are invaluable when investigating failures or performance issues.

Tracing involves some inherent performance overhead and leads to additional processing and storage costs. 
Span contexts must be sent from parent to child spans across process boundaries to enable linking spans to traces. Measuring latencies or response times and capturing exceptions or errors result in higher resource consumption. Transmitting observed data over the network to a preprocessor or a storage layer further increases costs. If monitoring and capturing everything is not an option, on-demand collection or sampling are the remaining alternatives.
The first approach does not allow for post-mortem analysis. Repetitions of the same failure must occur after activating trace collection to gain further insight. Demand-driven collection corresponds to sampling with rates 0 or 1, depending on whether the collection is off or on, respectively.
In the general case, the free choice of the sampling rate allows a smooth adjustment of the collection behavior between those two extremes. The larger the sampling rate, the more information is collected, and the higher the chance that problems and failures can be analyzed and explained by the recorded data. 

To minimize the loss of information, a clever sampling strategy is crucial. For example, stratified sampling \cite{Cochran1977} can be used to collect interesting events like slow service calls or errors at higher rates than frequent events representing normal behavior to reduce the variance. A major complication is that at the time of the sampling decision, it is often not known whether a trace is of interest or not. 

\subsection{Overview}
After elaborating the state of the art like head-based sampling and tail-based sampling, we will present a different approach called partial trace sampling. Unlike head-based sampling, which uses the same sampling decision for all spans of the trace, partial trace sampling generalizes this by allowing the sampling rate to be set individually and independently for every span. This flexibility enables better adaptation to local resource constraints and recording meaningful or infrequently accessed spans with higher probabilities.

Based on a mathematical model describing the new sampling approach, a general estimation algorithm is developed that allows the unbiased estimation of arbitrary quantities from partially sampled traces, like how often some service $\symServiceA$ called some other service $\symServiceB$. The results of this work allow deriving requirements and recommendations for monitoring frameworks like OpenTelemetry \cite{OpenTelemetry} to produce sampled trace data that is more useful for downstream statistical analysis.

As we have found that the performance of the proposed estimation algorithm strongly depends on the choice of sampling rates, a restriction to a discrete set of values makes sense. In particular, limiting sampling rates to powers of $\frac{1}{2}$ also has other benefits we will discuss. Unfortunately, this restriction conflicts with traditional rate-limiting sampling \cite{Shkuro2019}. Therefore, we present a workaround that can adhere to a prescribed discrete set of sampling rates.
Furthermore, as partial trace sampling can lead to fragmented traces, we briefly discuss which information could be added to the span context to reduce the estimation error for queries relying on hierarchical relationships.

\subsection{Head-Based Trace Sampling}
Head-based sampling \cite{Shkuro2019,Sambasivan2016, LasCasas2018, LasCasas2019}, also called up-front sampling \cite{Parker2020}, is a simple and widely used sampling approach for traces. The sampling decision for the whole trace is made at the endpoint and propagated together with the unique trace identifier as part of the span context to descendant spans which respect the sampling decision made for the root span. The simplicity of head-based sampling allows straightforward estimation. The reciprocal of the sampling rate used for the root span corresponds to the extrapolation factor, also called adjusted weight \cite{Cohen2009}, that applies to the whole trace.

Choosing an appropriate sampling rate is challenging. At the endpoint, which is the origin of a trace, not much is known. The attributes of the root span often do not reveal which branches will be called. Furthermore, they rarely indicate anomalous backend behavior and are therefore often not suited for filtering interesting traces. Furthermore, the sampling rate must be chosen small enough not to violate any constraints on data collection at backend services. Without knowledge about what is going on in the backend the choice of sampling rates may be difficult. Therefore, a feedback loop is often required to allow dynamic adaptation of the sampling behavior \cite{Shkuro2019}. This requires an orchestration service with a back channel to control the sampling rates at endpoints.
This rather indirect way of controlling the amount of collected spans leads to a slow adaptation to new situations. In particular, sudden jumps in load can violate the limits of data collection, and sudden drops in load can lead to more loss of information than necessary. In addition, head-based sampling is not capable of responding quickly to back pressure caused by the span processing pipeline.

\subsection{Tail-Based Trace Sampling}
Tail-based sampling \cite{Shkuro2019,Sambasivan2016, LasCasas2018, LasCasas2019}, also called response-based sampling  \cite{Parker2020} is another popular technique to reduce the number of traces that need to be stored. In contrast to head-based sampling, this approach collects all spans, and only after reconstructing the trace, a sampling decision is made. Since all information of a trace is available, the sampling rate can be chosen more intelligently than with head-based sampling. However, this comes with significant additional network and memory costs, because all spans of the same trace have to be sent to the same instance of a preprocessing service that buffers them until the trace has completed. Additional complexity is introduced if the preprocessing service needs to be scalable. Adding or removing instances requires rerouting spans of a trace to balance the work load which carries a risk of obtaining incomplete traces. Tail-based sampling can be combined with other more upstream sampling techniques like head-based sampling or partial trace sampling which is described in the following.

\subsection{Partial Trace Sampling}
If network costs are a major concern or rerouting and buffering spans is not an option, the sampling decision must be made immediately for each individual observed span. Independent sampling decisions for spans of the same trace would lead to a very small probability to record all spans of a trace simultaneously. Therefore, the sampling decision of different spans has to be consistent to increase the probability to collect all spans of a trace. Head-based sampling, for example, achieves consistency by propagating the sampling decision to child spans. 

Another way to enforce consistent sampling is to share a uniform random number $\symRandom\in[0,1)$ across all spans of a trace and sample the span only if the sampling condition $\symRandom<\symSampleRateFunction$ is satisfied \cite{Sigelman2010}. If the same sampling rate $\symSampleRateFunction$ is used for all spans, the result will be equivalent to that of head-based sampling with rate $\symSampleRateFunction$. 
Generating a random number at the root and propagating it as part of the span context together with the trace identifier to child spans  is one option to realize a common random number \cite{Molkova2020}. Alternatively, to avoid the propagation of an additional field, a hash value calculated from the trace identifier can be used instead \cite{Sigelman2010}. For that, a fast high-quality hash function \cite{Urban} is needed with outputs that can be regarded as uniform random values. If the trace identifiers are already generated using some high-quality random number generator, they could be directly mapped to the range $[0,1)$ and used as random numbers. Similar to propagating a random number this would save computation costs as hashing the trace identifier for every single span is avoided. 

If the sampling rate is not chosen homogeneously across all spans, the sampling decision is no longer the same for all spans, and as a result, traces may be only partially sampled. Google's tracing system called Dapper \cite{Sigelman2010}, for example, allows increasing the sampling rates for child spans \cite{MacDonald2021}. In this way, rarely called branches are sampled with a higher probability and more details are preserved that would otherwise have been lost with pure head-based sampling. This can lead to incomplete traces since only subgraphs of the same trace might be captured.

If backend services are called by many different endpoints or many times within the same trace, it is also useful to allow reducing the sampling rates for child spans. Hence, in the most general case, we want the sampling rates to be chosen independently for every single span as exemplified in \cite{Molkova2020}. This, of course, will lead to a fragmentation of traces, if spans in the middle are not recorded. Nevertheless, there is still a reasonable chance to record complete traces, if the spans are sampled consistently using a shared random number as described before. By definition, the smallest sampling rate used by any span corresponds to the probability that a trace is fully recorded. 

Unlike head-based sampling, partial trace sampling does not need global orchestration to set appropriate sampling rates at endpoints. They can be chosen independently, for example, to satisfy local constraints. While local constraints also limit the maximum sampling rate of head-based sampling, partial trace sampling is more flexible and allows using larger sampling rates for other spans and thus collecting more information. The number of complete traces would be the same for both, head-based and partial sampling. However, the additional fragments of other traces collected can reduce the estimation error. Often it is not necessary to know the full trace. Sometimes problems can be explained by looking at certain branches only. Even if different fragments are not connected because some span in the middle has not been sampled, we know that they must belong to the same trace because of the common trace identifier. Therefore, partially sampled traces are still useful to explain far-reaching dependencies.

To make the most out of the given information, incomplete traces need to be incorporated for estimation, which is not trivial in the general case. In particular, correct estimation of quantities that depend on multiple spans of a trace is not obvious. For example, estimating how often some service $\symServiceA$ called another service $\symServiceB$ also depends on the sampling rate of the intermediate spans connecting service $\symServiceA$ and $\symServiceB$, because only when collecting the hierarchical connection between $\symServiceA$ and $\symServiceB$, we really know that $\symServiceA$ called $\symServiceB$. To the best of our knowledge, there is no prior work that considers estimation from partially and consistently sampled traces. Therefore, we will present an estimation algorithm capable of providing unbiased estimates for arbitrary quantities of interest.

\section{Methodology}

Partial trace sampling allows to choose an individual sampling rate for every single span of a trace. The sampling rate function $\symSampleRateFunction:\symSetOfSpans\rightarrow (0,1]$ will describe the corresponding sampling rate for every span $\symSpan\in\symSetOfSpans$. Here, $\symSetOfSpans$ denotes the set of all spans of a given trace. The sampling rate of span $\symSpan\in\symSetOfSpans$ is given by $\symSampleRateFunction(\symSpan)$. If $\symRandom\in[0,1)$ is the random number shared across all spans of the trace and used for consistent sampling decisions, the span $\symSpan$ will be sampled if and only if 
\begin{equation}
\label{equ:sampling_decision}
\symRandom < \symSampleRateFunction(\symSpan) \Leftrightarrow \text{span}\ \symSpan\ \text{is sampled}.
\end{equation}
The sampling rates $\symSampleRateFunction(\symSpan)\in(0,1]$ can be freely chosen for each trace. However, we do not allow zero sampling rates, as unbiased estimation of something that is never sampled is impossible. Furthermore, we require that the sampling rates do not depend on the shared random number $\symRandom$. This implies, that it is not allowed to choose the sampling rate based on sampling decisions of other spans of the same trace. If $\symRandom$ is calculated from the trace identifier, the chosen sampling rate must also not depend on the trace identifier. 

However, the sampling rate may depend on any other span attributes. This allows, for example, to sample spans with long durations or with errors at a higher frequency. The sampling rate may even depend on attributes or sampling rates of other spans if the corresponding information is propagated and hence available when the sampling rate is needed for the sampling decision. The calculation of the sampling rate may also incorporate any local or global resource constraints.

For a trace with set of spans $\symSetOfSpans$, partial trace sampling can be mathematically described using a sampling function $\symDownSamplingFunction$ that maps $\symSetOfSpans$ to a corresponding subset of sampled spans 
\begin{equation*}
\symSetOfSampledSpans=\symDownSamplingFunction(\symSetOfSpans;\symRandom,\symSampleRateFunction)\subseteq\symSetOfSpans.
\end{equation*}
The sampling function $\symDownSamplingFunction$ is defined as
\begin{equation*}
\symDownSamplingFunction(\symSetOfSpans;\symRandom,\symSampleRateFunction)
:=
\lbrace\symSpan\in\symSetOfSpans: \symRandom<\symSampleRateFunction(\symSpan)\rbrace 
\end{equation*}
and also requires the shared uniform random number $\symRandom\in[0,1)$ and the sampling rate function $\symSampleRateFunction$.

Obviously, in the special case that $\symSampleRateFunction(\symSpan)$ is constant for all spans $\symSpan$ of a trace, either all or none of the spans will be sampled. This corresponds to the behavior of head-based sampling. To have the same sampling rate for every span of a trace, the sampling rate chosen for the root span must be propagated to descendent spans. This is in contrast to traditional head-based sampling where usually just a flag as specified for the W3C trace context \cite{W3CTraceContext} indicating the sampling decision of the root is propagated. Passing on the sampling rate has the advantage of immediately knowing the sampling rate and thus also the adjusted weight for each span. If only a flag representing the sampling decision is propagated, the sampling rate for non-root spans will not be known until the corresponding root span is available, from which the sampling rate can eventually be retrieved.

Partial trace sampling will result in a fully sampled trace, if the random number $\symRandom$ is smaller than all individual span sampling rates, mathematically expressed as
\begin{equation*}
\symDownSamplingFunction(\symSetOfSpans; \symRandom, \symSampleRateFunction) = \symSetOfSpans
\Leftrightarrow
\symRandom<\min_{\symSpan\in\symSetOfSpans} \symSampleRateFunction(\symSpan).
\end{equation*}
The corresponding probability is given by the minimum sampling rate
\begin{equation}
\label{equ:complete_sampled_probability}
\symProbability(\symDownSamplingFunction(\symSetOfSpans; \symRandom, \symSampleRateFunction) = \symSetOfSpans)= \min_{\symSpan\in\symSetOfSpans} \symSampleRateFunction(\symSpan).
\end{equation}

Unfortunately, for a given set of sampled spans, it is often not obvious whether a trace has been completely sampled or not. Furthermore, the sampling rates of discarded spans are usually not reported and hence the minimum sampling rate might not be known. Especially, if only some leaf spans are missing, the situation will not be clear. This is different from head-based sampling where the sampling rate is guaranteed to be equal for all spans and traces are always fully sampled. If it is important to know whether a trace is complete, partial trace sampling would require collecting extra information like the number of children of each span. Luckily, knowing the completeness of traces is often not necessary as shown in the following sections.

\subsection{Trace Analysis}
\label{sec:trace_analysis}
Traces are collected for various purposes. The analysis of individual traces, such as those associated with a particular failed action on a web page, is one important use case. When sampling is active, regardless of the method used, there is no guarantee that the traces of interest were recorded. If partial trace sampling is used, there is at least a higher chance that at least parts of the trace have been sampled. Traces can be easily sorted by completeness \cite{Molkova2020}. By definition, the traces with the smallest shared random number are the most complete.

Collective trace analysis is another important use case. Instead of looking at individual traces, information is extracted from all available traces using statistical means. Typically, we would like to know the sum of a certain quantity $\symQuantity$ over all traces represented as sets of spans $\symSetOfSpans_\symIndexI$
\begin{equation*}
\symSumQuantity(\symSetOfSpans_1,\symSetOfSpans_2,\ldots) = \sum_{\symIndexI}\symQuantity(\symSetOfSpans_\symIndexI).
\end{equation*}
Here $\symQuantity=\symQuantity(\symSetOfSpans)$ is a function that maps the set of spans $\symSetOfSpans$ of a given trace to a real number. Depending on what we would like to know, $\symQuantity$ can be arbitrarily defined, as demonstrated by the following examples:
\begin{itemize}
  \item If we are interested in the total count of traces we would simply have $\symQuantity(\symSetOfSpans):=1$.
  \item If we want to count traces with a certain property, $\symQuantity$ would be the corresponding indicator function, returning 1 or 0 dependent on whether the trace given by the set of spans $\symSetOfSpans$ has that property or not, respectively. For example, we might be interested in the number of traces calling some particular service $\symServiceA$. If $\symSetOfSpans$ contains at least one span corresponding to a call to $\symServiceA$, the returned value would be 1.
  \item If we are interested in the total number of spans, we would choose $\symQuantity(\symSetOfSpans):=|\symSetOfSpans|$, where $|\symSetOfSpans|$ denotes the cardinality of the set of spans $\symSetOfSpans$.
  \item If we want to know the number of spans that have a certain property, we would define $\symQuantity(\symSetOfSpans):=|\symSetOfSpans\cap\symOtherSet|$, where $\symOtherSet$ denotes the universe of spans with the desired property. For example, one might be interested in the number of spans which are associated with some specific error.
  \item If we want to determine the average call tree depth, $\symQuantity(\symSetOfSpans)$ would be defined to return the depth of the graph spanned by the set of spans $\symSetOfSpans$. Together with the estimate for the total count of traces an estimate of the average depth can be obtained.
\end{itemize}

\subsection{Estimation Problem}
When traces are sampled partially, only subsets of spans $\symSetOfSampledSpans_\symIndexI=\symDownSamplingFunction(\symSetOfSpans_\symIndexI;\symSampleRateFunction_\symIndexI,\symRandom_\symIndexI)\subseteq\symSetOfSpans_\symIndexI$ will be available for estimation. Complicating matters, we are often unable to verify their completeness if, for example, only leaf spans are missing. Moreover, the existence of traces, for which not a single span has been collected, is not even known.
Therefore, it is challenging to find an estimator $\symQuantityEstimator = \symQuantityEstimator(\symSetOfSampledSpans_\symIndexI)$, which provides an unbiased estimate for $\symQuantity(\symSetOfSpans_\symIndexI)$ based on the partially sampled set of spans $\symSetOfSampledSpans_\symIndexI$ and thus satisfies
$\symQuantity(\symSetOfSpans_\symIndexI)=\symExpectation\nolimits_{\symRandom_\symIndexI}(\symQuantityEstimator(\symSetOfSampledSpans_\symIndexI))$. 
We must have $\symQuantityEstimator(\emptyset)=0$, since a trace with none of its spans sampled cannot contribute to the estimate. 

If we have found such an unbiased estimator,
the composite estimator
\begin{equation*}
\symSumQuantityEstimate(\symSetOfSampledSpans_1, \symSetOfSampledSpans_2,\ldots) = \sum_{\symIndexI:\symSetOfSampledSpans_\symIndexI\neq\emptyset}\symQuantityEstimator(\symSetOfSampledSpans_\symIndexI)
\end{equation*}
will consequently be an unbiased estimator for $\symSumQuantity$, hence $\symExpectation\nolimits_{\symRandom_1,\symRandom_2,\ldots}(\symSumQuantityEstimate(\symSetOfSampledSpans_1, \symSetOfSampledSpans_2,\ldots)) = \symSumQuantity(\symSetOfSpans_1,\symSetOfSpans_2,\ldots)$. 
If the random values $\symRandom_\symIndexI$, that are used for sampling different traces, are independent, the corresponding variance will be given by
\begin{equation*}
\symVariance\nolimits_{\symRandom_1,\symRandom_2,\ldots}(\symSumQuantityEstimate(\symSetOfSampledSpans_1, \symSetOfSampledSpans_2,\ldots))
=
\sum_{\symIndexI}\symVariance\nolimits_{\symRandom_\symIndexI}(\symQuantityEstimator(\symSetOfSampledSpans_\symIndexI)).
\end{equation*}
As a consequence, the estimation problem boils down to finding an unbiased estimator $\symQuantityEstimator = \symQuantityEstimator(\symSetOfSampledSpans_\symIndexI)$ for the quantity of interest $\symQuantity(\symSetOfSpans_\symIndexI)$ from a single trace, which allows to omit the subscript $\symIndexI$ in the following.

\subsection{Naive Estimation Approach}
A simple approach is to use only complete traces for estimation. This requires knowing whether the set of sampled spans $\symSetOfSampledSpans$ is equal to the true set of spans $\symSetOfSpans$ of a given trace, which may often not be the case, as mentioned earlier. The naive estimator weights the quantity $\symQuantity(\symSetOfSampledSpans)$ of complete traces by the inverse of the probability that $\symSetOfSampledSpans=\symSetOfSpans$ given by $\symProbability(\symSetOfSampledSpans=\symSetOfSpans)=\frac{1}{\min_{\symSpan\in\symSetOfSampledSpans} \symSampleRateFunction(\symSpan)}$ according to  \eqref{equ:complete_sampled_probability}
\begin{equation}
\label{equ:naive_estimator}
\symNaiveQuantityEstimator(\symSetOfSampledSpans)
=
\begin{cases}
\symQuantity(\symSetOfSampledSpans)\frac{1}{\min_{\symSpan\in\symSetOfSampledSpans} \symSampleRateFunction(\symSpan)} & \text{if}\ \symSetOfSampledSpans = \symSetOfSpans, \\
0 & \text{if}\ \symSetOfSampledSpans \subset \symSetOfSpans.
\end{cases}
\end{equation}
By definition, this estimator is unbiased $\symExpectation\nolimits_\symRandom(\symNaiveQuantityEstimator(\symSetOfSampledSpans))=\symQuantity(\symSetOfSpans)$ and it can be shown that its variance is
\begin{equation*}
\symVariance\nolimits_\symRandom(\symNaiveQuantityEstimator(\symSetOfSampledSpans)) = \symQuantity(\symSetOfSpans)^2 \left(\frac{1}{\min_{\symSpan\in\symSetOfSpans} \symSampleRateFunction(\symSpan)}-1\right).
\end{equation*}

The naive estimator does not incorporate partially sampled traces and hence does not use all information available. Therefore, we will present another unbiased estimation approach that does not require knowing the completeness of traces and performs provably better when the quantity of interest satisfies certain conditions.

\subsection{Quantity Function}
For the simple estimator, the quantity function $\symQuantity$ needs to be defined only for sets of spans $\symSetOfSpans$ that represent complete traces. In the following we require $\symQuantity$ to be also defined for all nonempty subsets of spans $\symSetOfSampledSpans$ with $\emptyset\subset \symSetOfSampledSpans \subseteq \symSetOfSpans$. As long as $\symSetOfSampledSpans$ is distinguishable from the set of spans of a complete trace, we allow the value of $\symQuantity(\symSetOfSampledSpans)$ to be arbitrarily chosen. Otherwise, $\symQuantity(\symSetOfSampledSpans)$ must be equal to the true value of that trace.

As discussed later, it is beneficial if the extended function $\symQuantity$ satisfies certain conditions. Therefore, we call $\symQuantity$ \emph{bounded}, if $\emptyset\subset \symSetOfSampledSpans \subseteq \symSetOfSpans$ with $\symSetOfSpans$ denoting the set of spans of a complete trace implies
\begin{equation}
\label{equ:bounded}
\symQuantity(\symSetOfSampledSpans) \in [0, 2\symQuantity(\symSetOfSpans)].
\end{equation}
for all $\symSetOfSampledSpans$. In case $\symQuantity(\symSetOfSpans)<0$, the interval $[0, 2\symQuantity(\symSetOfSpans)]$ is interpreted as $[2\symQuantity(\symSetOfSpans),0]$. Furthermore, we call $\symQuantity$ \emph{monotonic}, if $\emptyset\subset \symSetOfSampledSpans_1 \subseteq \symSetOfSampledSpans_2\subseteq\symSetOfSpans$ implies
\begin{equation}
\label{equ:monotonic}
\symQuantity(\symSetOfSampledSpans_1) \in [0, \symQuantity(\symSetOfSampledSpans_2)].
\end{equation}
for all $\symSetOfSampledSpans_1$ and $\symSetOfSampledSpans_2$. Obviously, if $\symQuantity$ is monotonic, it will also be bounded, which can easily seen when setting $\symSetOfSampledSpans_1=\symSetOfSampledSpans$ and $\symSetOfSampledSpans_2=\symSetOfSpans$. 

Many quantities are inherently monotonic, such as the number of spans of a trace $\symQuantity(\symSetOfSpans)=|\symSetOfSpans|$.
However, there are quantities, which are not even bounded. 
For example, consider a trace that invokes two different services $\symServiceA$ and $\symServiceB$ in that order. The set of spans of this trace is $\symSetOfSpans = \lbrace\symSpanA, \symSpanB\rbrace$. Suppose we are interested in counting all traces that called $\symServiceA$ but not $\symServiceB$. Obviously, we would then have $\symQuantity(\symSetOfSpans)=0$. For the subset $\symSetOfSpans' = \lbrace \symSpanA\rbrace$ we must have $\symQuantity(\symSetOfSpans')=1$, if there is no additional information indicating whether $\symSetOfSpans'$ represents a complete trace or not. In the case of doubt, it must be assumed that the trace is complete. It follows that $0= \symQuantity(\symSetOfSpans)<\symQuantity(\symSetOfSpans')=1$, while $\emptyset\subset\symSetOfSpans'\subset \symSetOfSpans$ showing that $\symQuantity$ is not bounded and thus not monotonic as well.

In the case where it is always known whether a set of spans represents a completely sampled trace, $\symQuantity$ can always be defined to be monotonic by simply returning zero for incomplete sets of spans, hence $\symSetOfSampledSpans \subset \symSetOfSpans\Rightarrow \symQuantity(\symSetOfSampledSpans)=0$.
This is very similar to the naive estimator $\symNaiveQuantityEstimator$ \eqref{equ:naive_estimator}, which ignores incomplete traces. In fact, the new estimator we will propose is equivalent to $\symNaiveQuantityEstimator$, if the definition of $\symQuantity$ is extended this way.

\subsection{Estimation by Example}
To illustrate the basic idea of the new estimator, we first discuss a concrete example. Suppose we are interested in estimating the total number of spans, $\symQuantity(\symSetOfSpans)=|\symSetOfSpans|$. Furthermore, consider a trace $\symSetOfSpans=\lbrace\symSpan_\text{p},\symSpan_\text{c}\rbrace$ with exactly two spans, a parent (root) span $\symSpan_\text{p}$ and a child span $\symSpan_\text{c}$. Both are sampled with rates 
$\symSampleRateFunction(\symSpan_\text{p})$ and $\symSampleRateFunction(\symSpan_\text{c})$, respectively, satisfying $0 < \symSampleRateFunction(\symSpan_\text{c}) < \symSampleRateFunction(\symSpan_\text{p}) < 1$. Dependent on the shared random number $\symRandom$, three different outcomes are possible for the sampled trace $\symSetOfSampledSpans=\symDownSamplingFunction(\symSetOfSpans;\symRandom,\symSampleRateFunction)\subseteq\symSetOfSpans$:
\begin{enumerate}
\item $\symRandom\in[\symSampleRateFunction(\symSpan_\text{p}),1)\Rightarrow \symSetOfSampledSpans=\emptyset$: Since none of both spans is sampled, there is nothing to contribute to the estimate and we have $\symQuantityEstimator(\symSetOfSampledSpans)=0$. As $\symRandom$ is uniformly distributed, this case occurs with a probability of $\symProbability(\symSetOfSampledSpans = \emptyset) = 1-\symSampleRateFunction(\symSpan_\text{p})$.
\item $\symRandom\in[\symSampleRateFunction(\symSpan_\text{c}), \symSampleRateFunction(\symSpan_\text{p}))\Rightarrow \symSetOfSampledSpans=\lbrace\symSpan_\text{p}\rbrace$: Only the parent span is sampled in this case, which occurs with a probability of $\symProbability(\symSetOfSampledSpans=\lbrace\symSpan_\text{p}\rbrace) = \symSampleRateFunction(\symSpan_\text{p})-\symSampleRateFunction(\symSpan_\text{c})$. Since there is no information about child spans, we also do not know $\symSampleRateFunction(\symSpan_\text{c})$. Using just the information available, namely the parent span including its sampling rate $\symSampleRateFunction(\symSpan_\text{p})$, we have to assume that the trace originally consisted of only a single span. Hence, we have to weight this span with $1/\symSampleRateFunction(\symSpan_\text{p})$ to account for the unseen spans in order to get a supposed unbiased estimate given by $\symQuantityEstimator(\symSetOfSampledSpans)=\symQuantityEstimator(\lbrace\symSpan_\text{p}\rbrace) = 1/\symSampleRateFunction(\symSpan_\text{p})$.
\item $\symRandom\in[0, \symSampleRateFunction(\symSpan_\text{c}))\Rightarrow \symSetOfSampledSpans=\lbrace\symSpan_\text{p},\symSpan_\text{c}\rbrace=\symSetOfSpans$: Both spans are sampled. This occurs with a probability of $\symProbability(\symSetOfSampledSpans=\lbrace\symSpan_\text{p},\symSpan_\text{c}\rbrace)=\symSampleRateFunction(\symSpan_\text{c})$. If we would proceed as in the previous case, weighting the two observed spans with $1/\symSampleRateFunction(\symSpan_\text{c})$ and using the estimate $\symQuantityEstimator(\lbrace\symSpan_\text{p},\symSpan_\text{c}\rbrace) = 2/\symSampleRateFunction(\symSpan_\text{c})$, we would introduce a bias, because when calculating the weighted average over all three cases, we would get $\symExpectation_\symRandom(\symQuantityEstimator(\symSetOfSampledSpans)) = \symProbability(\symSetOfSampledSpans = \emptyset)\cdot\symQuantityEstimator(\emptyset)+\symProbability(\symSetOfSampledSpans=\lbrace\symSpan_\text{p}\rbrace)\cdot\symQuantityEstimator(\lbrace\symSpan_\text{p}\rbrace)+\symProbability(\symSetOfSampledSpans=\lbrace\symSpan_\text{p},\symSpan_\text{c}\rbrace)\cdot\symQuantityEstimator(\lbrace\symSpan_\text{p},\symSpan_\text{c}\rbrace) = 3 -\symSampleRateFunction(\symSpan_\text{c})/\symSampleRateFunction(\symSpan_\text{p})\neq 2$, which is not what we would expect for an unbiased estimator. We must set $\symQuantityEstimator(\lbrace\symSpan_\text{p},\symSpan_\text{c}\rbrace) = 1/\symSampleRateFunction(\symSpan_\text{c}) + 1/\symSampleRateFunction(\symSpan_\text{p})$ to fix the bias introduced by the previous case and to get $\symExpectation_\symRandom(\symQuantityEstimator(\symSetOfSampledSpans))=2$.
\end{enumerate}
This simple example demonstrates the basic idea of our new estimation approach. More complete traces must account for the estimation errors that are made with less complete traces.

\subsection{New Estimation Approach}
In the general case, we consider the set of distinct sampling rates of all spans in $\symSetOfSpans$ denoted in ascending order by $\symSampleRate_1 < \symSampleRate_2 < \ldots < \symSampleRate_{\symNumSampleRates}$, where $\symNumSampleRates$ is the total number of those values.
Dependent on the random number $\symRandom$, the spans $\symSetOfSpans$ may be sampled in $\symNumSampleRates$ different ways such that the sampled set of spans $\symSetOfSampledSpans$ is nonempty. 
If $\symRandom\in[\symSampleRate_{\symIndexI},\symSampleRate_{\symIndexI+1})$ (with $\symSampleRate_0:=0$ and $0\leq \symIndexI < \symNumSampleRates$) the sampled set of spans will be $\symSetOfSampledSpans = \symDownSamplingFunction(\symSetOfSpans; \symRandom, \symSampleRateFunction)=\symDownSamplingFunction(\symSetOfSpans; \symSampleRate_{\symIndexI}, \symSampleRateFunction)$. 
If $\symRandom\in[\symSampleRate_0,\symSampleRate_1)=[0, \min_{\symSpan\in\symSetOfSpans} \symSampleRateFunction(\symSpan))$ implying $\symSetOfSampledSpans=\symDownSamplingFunction(\symSetOfSpans; 0, \symSampleRateFunction)=\symSetOfSpans$, the trace will be fully sampled.
The different outcomes are distributed according to the probabilities $\symProbability(\symSetOfSampledSpans = \symDownSamplingFunction(\symSetOfSpans; \symSampleRate_{\symIndexI}, \symSampleRateFunction)) = \symSampleRate_{\symIndexI+1}-\symSampleRate_{\symIndexI}$. With a probability of $1 - \symSampleRate_{\symNumSampleRates}$ nothing is sampled, hence $\symProbability(\symSetOfSampledSpans = \emptyset)=1 - \symSampleRate_{\symNumSampleRates}=1-\max_{\symSpan\in\symSetOfSpans} \symSampleRateFunction(\symSpan)$.

Let us focus on the case $\symRandom\in[\symSampleRate_{\symIndexI},\symSampleRate_{\symIndexI+1})$ with $0\leq \symIndexI < \symNumSampleRates$. According to the way consistent sampling works, 
the sampled set of spans $\symSetOfSampledSpans$ will have $\symNumSampleRatesInSample = \symNumSampleRates - \symIndexI$ distinct sampling rates denoted in ascending order 
by $\symSampleRateInSample_1 < \symSampleRateInSample_2 < \ldots < \symSampleRateInSample_{\symNumSampleRatesInSample}$ and given by the relationship $\symSampleRateInSample_\symIndexJ = \symSampleRate_{\symIndexJ + \symIndexI}$. These sampling rates can also be obtained directly from $\symSetOfSampledSpans$ as $\lbrace\symSampleRateInSample_1, \symSampleRateInSample_2, \ldots, \symSampleRateInSample_\symNumSampleRatesInSample\rbrace = \bigcup_{\symSpan\in\symSetOfSampledSpans}\symSampleRateFunction(\symSpan)$. 

The new estimator as function of a nonempty set of consistently sampled spans $\symSetOfSampledSpans$ can then be written as
\begin{align}
&\symNewQuantityEstimator(\symSetOfSampledSpans)
:=
\frac{\symQuantity(\symSetOfSampledSpans)}{\symSampleRateInSample_{1}} - 
\sum_{\symIndexJ=1}^{\symNumSampleRatesInSample-1} 
\symQuantity(\symDownSamplingFunction(\symSetOfSampledSpans;\symSampleRateInSample_{\symIndexJ},\symSampleRateFunction))
\left(\frac{1}{\symSampleRateInSample_{\symIndexJ}}-\frac{1}{\symSampleRateInSample_{\symIndexJ+1}}\right)
\nonumber
\\
&\quad=
\frac{\symQuantity(\symSetOfSampledSpans)}{\symSampleRateInSample_{\symNumSampleRatesInSample}} 
+
\sum_{\symIndexJ=1}^{\symNumSampleRatesInSample-1}
\left(\symQuantity(\symSetOfSampledSpans)-\symQuantity(\symDownSamplingFunction(\symSetOfSampledSpans;\symSampleRateInSample_{\symIndexJ},\symSampleRateFunction))\right)
\left(
\frac{1}{\symSampleRateInSample_{\symIndexJ}}
-
\frac{1}{\symSampleRateInSample_{\symIndexJ+1}}
\right)
\nonumber
\\
&\quad=
\frac{\symQuantity(\symDownSamplingFunction(\symSetOfSampledSpans;\symSampleRateInSample_{\symNumSampleRatesInSample-1},\symSampleRateFunction))
}{\symSampleRateInSample_{\symNumSampleRatesInSample}}
+
\sum_{\symIndexJ=1}^{\symNumSampleRatesInSample-1}
\frac{\symQuantity(\symDownSamplingFunction(\symSetOfSampledSpans;\symSampleRateInSample_{\symIndexJ-1},\symSampleRateFunction))
-
\symQuantity(\symDownSamplingFunction(\symSetOfSampledSpans;\symSampleRateInSample_{\symIndexJ},\symSampleRateFunction))
}{\symSampleRateInSample_{\symIndexJ}}.
\label{equ:new_estimator}
\end{align}
It is easy to show that all these expressions are equivalent. The third expression requires $\symSampleRateInSample_0$ to be defined such that $\symDownSamplingFunction(\symSetOfSampledSpans;\symSampleRateInSample_0,\symSampleRateFunction)=\symSetOfSampledSpans$. Hence, any value from $[0, \symSampleRateInSample_1)$ could be used for $\symSampleRateInSample_0$.

According to \cref{the:unbiased} and \cref{the:variance}, both given and proved in the appendix, this estimator is unbiased 
\begin{equation*}
\symExpectation\nolimits_\symRandom(\symNewQuantityEstimator(\symSetOfSampledSpans)) = \symQuantity(\symSetOfSpans)
\end{equation*}
and has variance
\begin{multline}
\label{equ:variance_new}
\symVariance\nolimits_\symRandom(\symNewQuantityEstimator(\symSetOfSampledSpans))
=
\symQuantity(\symSetOfSpans)^2
\left(
\frac{1}{\symSampleRate_{\symNumSampleRates}}
-
1
\right)
\\
+
\sum_{\symIndexJ=1}^{\symNumSampleRates-1}
(\symQuantity(\symSetOfSpans)-\symQuantity(\symDownSamplingFunction(\symSetOfSpans;\symSampleRate_{\symIndexJ},\symSampleRateFunction)))^2
\left(
\frac{1}{\symSampleRate_{\symIndexJ}}
-
\frac{1}{\symSampleRate_{\symIndexJ+1}}
\right).
\end{multline}
In contrast to the naive estimator $\symNaiveQuantityEstimator$ \eqref{equ:naive_estimator}, $\symNewQuantityEstimator$ does not need to know whether $\symSetOfSampledSpans$ corresponds to a fully sampled trace. If $\symQuantity$ is bounded as defined by \eqref{equ:bounded}, $\symNewQuantityEstimator$ also leads to a smaller estimation error as $\symVariance_\symRandom (\symNewQuantityEstimator(\symSetOfSampledSpans)) \leq \symVariance_\symRandom (\symNaiveQuantityEstimator(\symSetOfSampledSpans))$ according to \cref{lem:comparison}.

If $\symQuantity$ is monotonic as defined by \eqref{equ:monotonic} and a trace is sampled using different sampling rate functions $\symSampleRateFunction_1$ and $\symSampleRateFunction_2$, where the second dominates the first meaning that $\symSampleRateFunction_1(\symSpan)\leq \symSampleRateFunction_2(\symSpan)$ for all $\symSpan\in\symSetOfSpans$, the variance will satisfy $\symVariance\nolimits_\symRandom(\symNewQuantityEstimator(\symDownSamplingFunction(\symSetOfSpans; \symRandom, \symSampleRateFunction_1))) \geq \symVariance\nolimits_\symRandom(\symNewQuantityEstimator(\symDownSamplingFunction(\symSetOfSpans; \symRandom, \symSampleRateFunction_2)))$ according to \cref{the:variance_comparison}. This is not surprising, as higher sample rates lead to more collected spans and more information that can be used to obtain more accurate estimates. 

\subsection{Estimating Counts}
Counting spans or traces that match certain filter criteria is a very common task. For these special cases, the new estimator \eqref{equ:new_estimator} can be significantly simplified. When estimating the number of matching spans from a partially sampled trace, the quantity of interest would be $\symQuantity(\symSetOfSpans) = |\symSetOfSpans \cap \symOtherSet|$, where $\symOtherSet$ denotes the universe of matching spans. It can be shown that estimator \eqref{equ:new_estimator} simplifies to (see \cref{cor:span_extrapolation})
\begin{equation*}
\symNewQuantityEstimator(\symSetOfSampledSpans)=
\sum_{\symSpan\in\symSetOfSampledSpans\cap\symOtherSet}
\frac{1}{\symSampleRateFunction(\symSpan)}.
\end{equation*}
This is actually what we would have expected as it is simply the sum of inverse sampling rates of all matching spans.

When counting matching traces, $\symQuantity$ would correspond to an indicator function. If $\symQuantity$ is monotonic the estimator can be simplified to (see \cref{cor:indicator_estimation})
\begin{equation}
\label{equ:trace_count_extrapolation}
\symNewQuantityEstimator(\symSetOfSampledSpans)
=
\begin{cases}
\frac{1}{\symSampleRateInSample_\symIndexK} & \text{if}\ \symQuantity(\symSetOfSampledSpans) = 1\\
0 & \text{if}\ \symQuantity(\symSetOfSampledSpans) = 0
\end{cases}
\end{equation}
with  $\symIndexK:=\max\lbrace \symIndexJ\in\lbrace 1,2,\ldots,\symNumSampleRatesInSample\rbrace : \symQuantity(\symDownSamplingFunction(\symSetOfSampledSpans;\symSampleRateInSample_{\symIndexJ-1},\symSampleRateFunction))=1 \rbrace$ and $\symSampleRateInSample_0:=0$.
Hence, the extrapolation factor depends on the minimum sampling rate of all spans needed to match the given filter criteria.

\subsection{Practical Considerations}

\cref{alg:estimation} summarizes the new estimation approach based on \eqref{equ:new_estimator}. The number of loop iterations is given by the number of unique sampling rates found in the sampled set of spans $\symSetOfSampledSpans$. In the case where each sampling rate of each span is distinct and the evaluation of $\symQuantity(\symSetOfSampledSpans)$ takes $\symBigO(|\symSetOfSampledSpans|)$ time, the algorithm will exhibit an $\symBigO(|\symSetOfSampledSpans|^2)$ time complexity. Some quantities allow optimizations that reduce the worst case time complexity. However, in the general case, the time complexity can only be reduced, if the number of different sampling rates is limited.

\begin{algorithm}
\caption{Estimation algorithm.}
\label{alg:estimation}
\begin{algorithmic}
\Require $\symSetOfSampledSpans$ \Comment{nonempty set of sampled spans of the same trace}
\State\hspace{0.72cm}$\symSampleRateFunction:\symSetOfSampledSpans\rightarrow(0,1]$\Comment returns sampling rate of given span
\State\hspace{0.72cm}$\symQuantity$\Comment extracts quantity of interest from given set of spans
\Ensure $\symQuantityEstimator$\Comment unbiased estimate for quantity of interest
\State $\symQuantityEstimator \gets 0$
\State $\symQuantity_\text{prev}  \gets \symQuantity(\symSetOfSampledSpans)$
\Loop
\State $\symSampleRateInSample\gets \min_{\symSpan\in\symSetOfSampledSpans} \symSampleRateFunction(\symSpan)$
\State $\symSetOfSampledSpans\gets \lbrace\symSpan \in \symSetOfSampledSpans:\symSampleRateInSample < \symSampleRateFunction(\symSpan)\rbrace $\Comment downsampling step
\State \algorithmicif\ $\symSetOfSampledSpans$ is empty \algorithmicthen\ \Return $\symQuantityEstimator + \symQuantity_\text{prev} / \symSampleRateInSample$
\State $\symQuantity_\text{next}  \gets \symQuantity(\symSetOfSampledSpans)$
\State $\symQuantityEstimator \gets \symQuantityEstimator + (\symQuantity_\text{prev} - \symQuantity_\text{next}) / \symSampleRateInSample$
\State $\symQuantity_\text{prev}\gets \symQuantity_\text{next}$
\EndLoop
\end{algorithmic}
\end{algorithm}

As a solution, we propose a restriction to values of a geometric sequence $\symSampleRateFunction(\symSpan)\in \lbrace 1, \symBase, \symBase^2, \symBase^3,\ldots\rbrace$ with $\symBase\in(0,1)$. 
In this way, the expected number of different sampling rates present in the sample of spans will be constant. 
Assume that the maximum sampling rate of all spans in $\symSetOfSpans$ is $\symBase^{\symIndexK}=\max_{\symSpan\in\symSetOfSpans}\symSampleRateFunction(\symSpan)$ with some $\symIndexK\geq 0$. The sampled set of spans $\symSetOfSampledSpans$ can be nonempty only if $\symRandom<\symBase^{\symIndexK}$ according to \eqref{equ:sampling_decision}. Given $\symRandom<\symBase^{\symIndexK}$, the conditional probability that spans with sampling rate $\symBase^{\symIndexK+\symIndexI}$ are sampled, is $\symProbability(\symRandom < \symBase^{\symIndexK+\symIndexI}\mid \symRandom<\symBase^{\symIndexK})=\symBase^\symIndexI$, where $\symIndexI\geq 0$. Therefore, the expected number of different sampling rates of spans in $\symSetOfSampledSpans$ must be bounded by $\sum_{\symIndexI=0}^\infty \symBase^\symIndexI = \frac{1}{1-\symBase}=\symBigO(1)$.

A further advantage of using a discrete set of sampling rates is that the index $\symIndexJ\geq 0$ can be used to encode the sampling rate $\symBase^\symIndexJ$ instead of its numerical representation. Likewise, the shared random number $\symRandom$ can be encoded as integer as well, because for the sampling decision \eqref{equ:sampling_decision} it is only relevant to which of the intervals $[\symBase^{\symIndexI+1},\symBase^{\symIndexI})$ with $\symIndexI\geq 0$ the shared random value $\symRandom$ belongs to. Therefore, it is sufficient to propagate the index of the corresponding interval instead of the numerical representation of $\symRandom$. The sampling decision simplifies to a plain integer comparison
\begin{equation*}
\symIndexI \geq \symIndexJ \Leftrightarrow \text{span}\ \symSpan\ \text{is sampled},
\end{equation*}
where $\symIndexJ$ denotes the exponent of the sampling rate $\symSampleRateFunction(\symSpan)=\symBase^\symIndexJ$ and $\symIndexI$ indicates $\symRandom\in [\symBase^{\symIndexI+1}, \symBase^\symIndexI)$.
Since $\symRandom$ is uniformly distributed, the index $\symIndexI$ will be geometrically distributed with success probability $(1-\symBase)$ as
\begin{equation*}
\symProbability(\symRandom\in [\symBase^{\symIndexI+1}, \symBase^\symIndexI)) = \symBase^\symIndexI - \symBase^{\symIndexI+1} = \symBase^{\symIndexI}(1-\symBase).
\end{equation*}
It can therefore be drawn directly from the corresponding geometric distribution without having to generate $\symRandom$ first.

According to our experience, $\symBase=\frac{1}{2}$ is a good choice as it gives enough freedom to select the sampling rate while also offering some other nice benefits: 
\begin{itemize}
\item A single byte is able to encode all relevant sampling rates. Similarly, a single byte is sufficient to encode the index of the interval that surrounds the shared random number which reduces the information that needs to propagated to child spans.
\item As already discussed, it is favorable to directly generate $\symIndexI$ indicating the interval $[\symBase^{\symIndexI+1},\symBase^{\symIndexI})$ enclosing the uniform random number $\symRandom$. Drawing a random value from a geometric distribution with a success probability of $\symBase=\frac{1}{2}$ can be efficiently implemented by simply taking the number of leading zeros of a uniformly distributed random integer. This is a very cheap and natively supported operation on most processors.
\item If all possible sampling rates are reciprocals of integers as is the case for all values of a geometric sequence with coefficient $\symBase=\frac{1}{2}$, estimates of integer quantities like counts will be integers as well. $\symQuantity\in\mathbb{Z}$ and $\frac{1}{\symSampleRateInSample_\symIndexJ}\in\mathbb{Z}$ imply $\symQuantityEstimator(\symSetOfSampledSpans)\in\mathbb{Z}$ according to \eqref{equ:naive_estimator} and \eqref{equ:new_estimator}. 
It is desirable to have estimates from the same number system, as consumers of estimated quantities do not have to deal with rounding.
\end{itemize}
The obvious downside of being limited to a discrete set of values is that the sampling rate cannot take on any arbitrary value. However, this is important for techniques like rate-limiting sampling \cite{Shkuro2019}. Therefore, a workaround is presented in the following, that allows freely choosing the effective sampling rate while still having the span sampling rates from the given discrete set.

\subsection{Rate-Limiting Sampling}
A way to limit the rate of sampled spans is to choose the sampling rate $\symDesiredSampleRate$ proportional to the average time $\symDeltaTime$ between subsequent spans \cite{Shkuro2019}, $\symDesiredSampleRate = \min(1, \symDeltaTime\cdot\symRateLimit )$ where $\symRateLimit$ denotes the rate limit. The average time between subsequent spans needs to be estimated from the timestamps of the most recent spans. In the simplest case, it is just the elapsed time since the last span.

The restriction to a discrete set of sampling rates makes rate-limiting sampling more difficult as the sampling rate cannot be freely chosen. Assume that the desired sampling rate is $\symDesiredSampleRate$ and lies between the two possible sampling rates $\symBase^{\symIndexI+1}$ and $\symBase^{\symIndexI}$ with $\symIndexI\geq 0$ such that $\symDesiredSampleRate\in(\symBase^{\symIndexI+1},\symBase^{\symIndexI}]$. 
The idea is to randomly select one of these two sampling rates for each span. This obviously then leads to an effective sampling rate that is in between.
In particular, if $\symBase^\symIndexI$ is chosen with a probability of $\frac{\symDesiredSampleRate-\symBase^{\symIndexI+1}}{
\symBase^\symIndexI-\symBase^{\symIndexI+1}}$ and $\symBase^{\symIndexI+1}$ with the corresponding complementary probability of $(1-\frac{\symDesiredSampleRate-\symBase^{\symIndexI+1}}{
\symBase^\symIndexI-\symBase^{\symIndexI+1}})$, the effective sampling rate will be 
$\symBase^\symIndexI
\frac{\symDesiredSampleRate-\symBase^{\symIndexI+1}}{
\symBase^\symIndexI-\symBase^{\symIndexI+1}}
+
\symBase^{\symIndexI+1}
(1-\frac{\symDesiredSampleRate-\symBase^{\symIndexI+1}}{
\symBase^\symIndexI-\symBase^{\symIndexI+1}})
=
\symDesiredSampleRate
$ as desired.

\subsection{Span Context}

Partially sampled traces often break into multiple fragments of spans. Therefore, it is feasible to propagate additional information along the trace identifier to allow better reconstruction of hierarchical relationships between spans, even when they cannot be directly lined up.
If the sampling decision is negative, it makes little sense to propagate the span identifier to child spans, as it would be a reference to a non-sampled and therefore later unknown span. Instead, if a span is not sampled, it is more useful to add the identifier of the last sampled ancestor span to the propagated context. In this way, child spans can be linked to the nearest sampled ancestor span. 
When the number of consecutive non-sampled spans is also counted and propagated, even the degree of the ancestor would be known. 

This additional information about hierarchical relationships between spans are very helpful for queries such as counting the number of times a service $\symServiceA$ called some other service $\symServiceB$. If we can determine the hierarchical relationship between $\symServiceA$ and $\symServiceB$ without knowing all the spans in between, the indicator function \eqref{equ:trace_count_extrapolation} will be nonzero for smaller sampled sets of spans, eventually reducing the variance according to \eqref{equ:variance_new}.

\section{Conclusion}
We have presented a theoretical foundation for partial trace sampling that allows sampling rates for spans to be chosen independently without requiring a global service to orchestrate sampling rates at endpoints, as is the case with head-based sampling. 
Given the same local constraints, partial trace sampling can collect more detailed information, especially from rarely called branches. The proposed unbiased estimator can exploit this additional information without knowing the completeness of traces and provably reduces the estimation error in many cases.

\bibliographystyle{ACM-Reference-Format}


\appendix
\section*{Proofs}
\begin{lemma}
\label{the:unbiased}
The estimator given by \eqref{equ:new_estimator} is unbiased, hence $\symExpectation\nolimits_\symRandom(\symNewQuantityEstimator(\symSetOfSampledSpans)) = \symQuantity(\symSetOfSpans)$.
\end{lemma}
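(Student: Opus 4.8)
The plan is to compute $\symExpectation_\symRandom(\symNewQuantityEstimator(\symSetOfSampledSpans))$ directly by conditioning on which of the $\symNumSampleRates+1$ half-open subintervals of $[0,1)$ the shared random number $\symRandom$ falls into. As already established above, $\symRandom\in[\symSampleRate_\symIndexI,\symSampleRate_{\symIndexI+1})$ (with $\symSampleRate_0:=0$) occurs with probability $\symSampleRate_{\symIndexI+1}-\symSampleRate_\symIndexI$ for $0\leq\symIndexI<\symNumSampleRates$ and produces $\symSetOfSampledSpans=\symDownSamplingFunction(\symSetOfSpans;\symSampleRate_\symIndexI,\symSampleRateFunction)$, whose distinct rates are exactly $\symSampleRateInSample_\symIndexJ=\symSampleRate_{\symIndexJ+\symIndexI}$ for $\symIndexJ=1,\ldots,\symNumSampleRates-\symIndexI$, while $\symRandom\in[\symSampleRate_{\symNumSampleRates},1)$ gives $\symSetOfSampledSpans=\emptyset$ and contributes nothing because $\symNewQuantityEstimator(\emptyset)=0$.

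The enabling observation is that iterated down-sampling collapses: for $\symIndexK\geq\symIndexI$ one has $\symDownSamplingFunction(\symDownSamplingFunction(\symSetOfSpans;\symSampleRate_\symIndexI,\symSampleRateFunction);\symSampleRate_\symIndexK,\symSampleRateFunction)=\symDownSamplingFunction(\symSetOfSpans;\symSampleRate_\symIndexK,\symSampleRateFunction)$, straight from the definition of $\symDownSamplingFunction$ and $\symSampleRate_\symIndexK\geq\symSampleRate_\symIndexI$. Abbreviating $\symQuantity_\symIndexI:=\symQuantity(\symDownSamplingFunction(\symSetOfSpans;\symSampleRate_\symIndexI,\symSampleRateFunction))$, so that $\symQuantity_0=\symQuantity(\symSetOfSpans)$ since $\symDownSamplingFunction(\symSetOfSpans;0,\symSampleRateFunction)=\symSetOfSpans$, the first form of the estimator \eqref{equ:new_estimator} evaluated on the event $\symRandom\in[\symSampleRate_\symIndexI,\symSampleRate_{\symIndexI+1})$ becomes, after the index shift $\symIndexK=\symIndexJ+\symIndexI$, equal to $\frac{\symQuantity_\symIndexI}{\symSampleRate_{\symIndexI+1}}-\sum_{\symIndexK=\symIndexI+1}^{\symNumSampleRates-1}\symQuantity_\symIndexK\bigl(\frac{1}{\symSampleRate_\symIndexK}-\frac{1}{\symSampleRate_{\symIndexK+1}}\bigr)$; that is, the estimator on the partial trace lines up with the global rate ladder of $\symSetOfSpans$.

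Then I would form the expectation $\sum_{\symIndexI=0}^{\symNumSampleRates-1}(\symSampleRate_{\symIndexI+1}-\symSampleRate_\symIndexI)\bigl[\frac{\symQuantity_\symIndexI}{\symSampleRate_{\symIndexI+1}}-\sum_{\symIndexK=\symIndexI+1}^{\symNumSampleRates-1}\symQuantity_\symIndexK(\frac{1}{\symSampleRate_\symIndexK}-\frac{1}{\symSampleRate_{\symIndexK+1}})\bigr]$ and simplify it in two pieces. The diagonal part collapses to $\sum_{\symIndexI=0}^{\symNumSampleRates-1}\symQuantity_\symIndexI(1-\frac{\symSampleRate_\symIndexI}{\symSampleRate_{\symIndexI+1}})$. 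For the double sum I would swap the order of summation and apply the telescoping identity $\sum_{\symIndexI=0}^{\symIndexK-1}(\symSampleRate_{\symIndexI+1}-\symSampleRate_\symIndexI)=\symSampleRate_\symIndexK$ (using $\symSampleRate_0=0$), turning it into $\sum_{\symIndexK=1}^{\symNumSampleRates-1}\symQuantity_\symIndexK(1-\frac{\symSampleRate_\symIndexK}{\symSampleRate_{\symIndexK+1}})$. Subtracting the two, every term with index at least $1$ cancels and only $\symQuantity_0(1-\frac{\symSampleRate_0}{\symSampleRate_1})=\symQuantity_0=\symQuantity(\symSetOfSpans)$ survives, which is the claim.

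The algebra in the last step is routine telescoping; the part requiring care is the second paragraph, namely checking that on each interval $\symSetOfSampledSpans$ carries precisely the tail $\symSampleRate_{\symIndexI+1}<\cdots<\symSampleRate_{\symNumSampleRates}$ of the rate ladder and that the internal down-samplings $\symDownSamplingFunction(\symSetOfSampledSpans;\symSampleRateInSample_\symIndexJ,\symSampleRateFunction)$ in \eqref{equ:new_estimator} coincide with $\symDownSamplingFunction(\symSetOfSpans;\symSampleRate_{\symIndexJ+\symIndexI},\symSampleRateFunction)$ — this is where an off-by-one slip in the index shift $\symSampleRateInSample_\symIndexJ=\symSampleRate_{\symIndexJ+\symIndexI}$ would be easy to make and must be pinned down precisely.
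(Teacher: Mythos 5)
Your proposal is correct and follows essentially the same route as the paper's proof: condition on the interval $[\symSampleRate_\symIndexI,\symSampleRate_{\symIndexI+1})$ containing $\symRandom$, identify $\symSampleRateInSample_\symIndexJ=\symSampleRate_{\symIndexJ+\symIndexI}$ so the conditional estimate is expressed on the global rate ladder via $\symQuantity_\symIndexI:=\symQuantity(\symDownSamplingFunction(\symSetOfSpans;\symSampleRate_\symIndexI,\symSampleRateFunction))$, then swap the order of summation and telescope using $\sum_{\symIndexI=0}^{\symIndexK-1}(\symSampleRate_{\symIndexI+1}-\symSampleRate_\symIndexI)=\symSampleRate_\symIndexK$. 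The only cosmetic difference is that you start from the first of the three equivalent forms of \eqref{equ:new_estimator} (splitting into a diagonal part and a double sum) whereas the paper uses the third form (which packages the terms directly as $(\symQuantity_{\symIndexJ-1}-\symQuantity_{\symIndexJ})/\symSampleRate_{\symIndexJ}$); your algebra checks out and yields $\symQuantity_0=\symQuantity(\symSetOfSpans)$ as required.
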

\begin{proof}
In the case that $\symRandom\in[\symSampleRate_{\symIndexI},\symSampleRate_{\symIndexI+1})$
we have
$\symQuantity(\symDownSamplingFunction(\symSetOfSampledSpans;\symSampleRateInSample_{\symIndexJ},\symSampleRateFunction))
=
\symQuantity(\symDownSamplingFunction(\symSetOfSpans;\symSampleRateInSample_{\symIndexJ},\symSampleRateFunction))
=
\symQuantity(\symDownSamplingFunction(\symSetOfSpans;\symSampleRate_{\symIndexI+\symIndexJ},\symSampleRateFunction))
=
\symQuantity_{\symIndexI+\symIndexJ}
$ 
for $0\leq\symIndexJ<\symNumSampleRatesInSample=\symNumSampleRates-\symIndexI$. Here we used the relationship  $\symSampleRateInSample_\symIndexJ = \symSampleRate_{\symIndexJ + \symIndexI}$ and introduced the shorthand notation
$\symQuantity_{\symIndexI} := \symQuantity(\symDownSamplingFunction(\symSetOfSpans;\symSampleRate_{\symIndexI},\symSampleRateFunction))$ for $0\leq \symIndexI < \symNumSampleRates$. Together with the definition $\symQuantity_\symNumSampleRates:=0$ the estimate according to \eqref{equ:new_estimator} conditioned on $\symRandom\in[\symSampleRate_{\symIndexI},\symSampleRate_{\symIndexI+1})$ can be written as
\begin{equation*}
\symNewQuantityEstimator(\symSetOfSampledSpans)
\mid
\symRandom\in[\symSampleRate_{\symIndexI},\symSampleRate_{\symIndexI+1})
=
\sum_{\symIndexJ=1}^{\symNumSampleRatesInSample}
\frac{\symQuantity_{\symIndexI+\symIndexJ-1}-\symQuantity_{\symIndexI+\symIndexJ}}{\symSampleRate_{\symIndexI+\symIndexJ}}
=
\sum_{\symIndexJ=\symIndexI+1}^{\symNumSampleRates}
\frac{\symQuantity_{\symIndexJ-1}-\symQuantity_{\symIndexJ}}{\symSampleRate_{\symIndexJ}}.
\end{equation*}
Calculating the expectation by averaging over all possible values of $\symRandom\in[0,1)$, where only values from $[0, \symSampleRate_\symNumSampleRates)$ lead to nonempty sets of sampled spans and therefore can contribute, yields
\begin{equation*}
\begin{aligned}
& \symExpectation\nolimits_\symRandom(\symNewQuantityEstimator(\symSetOfSampledSpans))
=
\\
& =
\sum_{\symIndexI=0}^{\symNumSampleRates-1}
\symProbability(\symRandom \in [\symSampleRate_{\symIndexI}, \symSampleRate_{\symIndexI+1}))
\symExpectation\nolimits_\symRandom(
\symNewQuantityEstimator(\symSetOfSampledSpans)
\mid
\symRandom\in[\symSampleRate_{\symIndexI},\symSampleRate_{\symIndexI+1}))
\\
& =
\sum_{\symIndexI=0}^{\symNumSampleRates-1}
\left(\symSampleRate_{\symIndexI+1}-\symSampleRate_{\symIndexI}\right)
\sum_{\symIndexJ=\symIndexI+1}^{\symNumSampleRates}
\frac{\symQuantity_{\symIndexJ-1}-\symQuantity_{\symIndexJ}}{\symSampleRate_{\symIndexJ}}
\\
& =
\sum_{\symIndexI=0}^{\symNumSampleRates-1}
\sum_{\symIndexJ=\symIndexI+1}^{\symNumSampleRates}
\left(\symSampleRate_{\symIndexI+1}-\symSampleRate_{\symIndexI}\right)
\frac{\symQuantity_{\symIndexJ-1}-\symQuantity_{\symIndexJ}}{\symSampleRate_{\symIndexJ}}
\\
& =
\sum_{\symIndexJ=1}^{\symNumSampleRates}
\sum_{\symIndexI=0}^{\symIndexJ-1}
\left(\symSampleRate_{\symIndexI+1}-\symSampleRate_{\symIndexI}\right)
\frac{\symQuantity_{\symIndexJ-1}-\symQuantity_{\symIndexJ}}{\symSampleRate_{\symIndexJ}}
=
\sum_{\symIndexJ=1}^{\symNumSampleRates}
\left(\symSampleRate_{\symIndexJ}-\symSampleRate_{0}\right)
\frac{\symQuantity_{\symIndexJ-1}-\symQuantity_{\symIndexJ}}{\symSampleRate_{\symIndexJ}}
\\
& =
\sum_{\symIndexJ=1}^{\symNumSampleRates}
\symQuantity_{\symIndexJ-1}-\symQuantity_{\symIndexJ}
=
\symQuantity_{0}-\symQuantity_{\symNumSampleRates}
=
\symQuantity_{0}
=
\symQuantity(\symDownSamplingFunction(\symSetOfSpans;0,\symSampleRateFunction))
=
\symQuantity(\symSetOfSpans).
\end{aligned}
\end{equation*}
\end{proof}

\begin{lemma}
\label{the:variance}
The variance of estimator \eqref{equ:new_estimator} is given by 
\begin{align*}
&\symVariance\nolimits_\symRandom(\symNewQuantityEstimator(\symSetOfSampledSpans))=
\\
&=\symQuantity(\symSetOfSpans)^2
\left(
\frac{1}{\symSampleRate_{\symNumSampleRates}}
-
1
\right)
+
\sum_{\symIndexJ=1}^{\symNumSampleRates-1}
(\symQuantity(\symSetOfSpans)-\symQuantity(\symDownSamplingFunction(\symSetOfSpans;\symSampleRate_{\symIndexJ},\symSampleRateFunction)))^2
\left(
\frac{1}{\symSampleRate_{\symIndexJ}}
-
\frac{1}{\symSampleRate_{\symIndexJ+1}}
\right)
\end{align*}
\end{lemma}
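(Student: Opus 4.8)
The plan is to build on the bookkeeping already set up in the proof of \cref{the:unbiased}. Keep the shorthand $\symQuantity_\symIndexI := \symQuantity(\symDownSamplingFunction(\symSetOfSpans;\symSampleRate_\symIndexI,\symSampleRateFunction))$ for $0\le\symIndexI\le\symNumSampleRates$, with the conventions $\symSampleRate_0:=0$ and $\symQuantity_\symNumSampleRates:=0$, and observe that these $\symNumSampleRates+1$ numbers are \emph{deterministic}: they depend only on the full trace $\symSetOfSpans$, not on the shared random number $\symRandom$. Introduce the indicator variables $X_\symIndexJ$, equal to $1$ if $\symRandom<\symSampleRate_\symIndexJ$ and $0$ otherwise, for $1\le\symIndexJ\le\symNumSampleRates$. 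Since $\symRandom$ is uniform on $[0,1)$ and the rates increase in the index, $\symExpectation\nolimits_\symRandom(X_\symIndexJ)=\symSampleRate_\symIndexJ$ and $\symExpectation\nolimits_\symRandom(X_\symIndexJ X_\symIndexK)=\symSampleRate_{\min(\symIndexJ,\symIndexK)}$, hence $\symExpectation\nolimits_\symRandom(X_\symIndexJ/\symSampleRate_\symIndexJ)=1$ and the covariance of $X_\symIndexJ/\symSampleRate_\symIndexJ$ with $X_\symIndexK/\symSampleRate_\symIndexK$ equals $1/\symSampleRate_{\max(\symIndexJ,\symIndexK)}-1$.

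First I would rewrite the estimator by summation by parts. Applying Abel summation to $\symNewQuantityEstimator$ (equivalently, evaluating \eqref{equ:new_estimator} on each interval $\symRandom\in[\symSampleRate_\symIndexI,\symSampleRate_{\symIndexI+1})$ via $\symSampleRateInSample_\symIndexJ=\symSampleRate_{\symIndexI+\symIndexJ}$, $\symNumSampleRatesInSample=\symNumSampleRates-\symIndexI$, exactly as in the proof of \cref{the:unbiased}) yields the identity, valid for every realisation of $\symRandom$,
\begin{equation*}
\symNewQuantityEstimator(\symSetOfSampledSpans)=\symQuantity_0\,\frac{X_\symNumSampleRates}{\symSampleRate_\symNumSampleRates}+\sum_{\symIndexJ=1}^{\symNumSampleRates-1}(\symQuantity_0-\symQuantity_\symIndexJ)\left(\frac{X_\symIndexJ}{\symSampleRate_\symIndexJ}-\frac{X_{\symIndexJ+1}}{\symSampleRate_{\symIndexJ+1}}\right),
\end{equation*}
where the empty-sample case $\symSetOfSampledSpans=\emptyset$ is read off as the right-hand side being $0$. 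The point of this rearrangement is that it expresses $\symNewQuantityEstimator$ as a linear combination, with the \emph{deterministic} coefficients $\symQuantity_0$ and $\symQuantity_0-\symQuantity_\symIndexJ$, of the random variables $\symZ_\symIndexJ:=X_\symIndexJ/\symSampleRate_\symIndexJ-X_{\symIndexJ+1}/\symSampleRate_{\symIndexJ+1}$ for $1\le\symIndexJ\le\symNumSampleRates-1$ together with $X_\symNumSampleRates/\symSampleRate_\symNumSampleRates$.

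The heart of the argument is then a short covariance computation. Plugging the covariance of $X_\symIndexJ/\symSampleRate_\symIndexJ$ with $X_\symIndexK/\symSampleRate_\symIndexK$ into the bilinear expansion of the covariance of $\symZ_\symIndexJ$ and $\symZ_{\symIndexJ'}$, the four resulting terms telescope and cancel whenever $\symIndexJ\neq\symIndexJ'$; likewise each $\symZ_\symIndexJ$ is uncorrelated with $X_\symNumSampleRates/\symSampleRate_\symNumSampleRates$. Hence $\symZ_1,\ldots,\symZ_{\symNumSampleRates-1}$ and $X_\symNumSampleRates/\symSampleRate_\symNumSampleRates$ are pairwise uncorrelated, with $\symVariance\nolimits_\symRandom(\symZ_\symIndexJ)=1/\symSampleRate_\symIndexJ-1/\symSampleRate_{\symIndexJ+1}$ and $\symVariance\nolimits_\symRandom(X_\symNumSampleRates/\symSampleRate_\symNumSampleRates)=1/\symSampleRate_\symNumSampleRates-1$. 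Taking the variance of the linear combination term by term and substituting back $\symQuantity_0=\symQuantity(\symSetOfSpans)$ and $\symQuantity_\symIndexJ=\symQuantity(\symDownSamplingFunction(\symSetOfSpans;\symSampleRate_\symIndexJ,\symSampleRateFunction))$ gives the claimed expression.

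I expect the only real obstacle to be choosing this representation: once the summation-by-parts form is written down, the telescoping of the covariances is immediate. A more pedestrian alternative avoids the uncorrelatedness observation entirely: condition on $\symRandom\in[\symSampleRate_\symIndexI,\symSampleRate_{\symIndexI+1})$, reuse the conditional value $\sum_{\symIndexJ=\symIndexI+1}^{\symNumSampleRates}(\symQuantity_{\symIndexJ-1}-\symQuantity_\symIndexJ)/\symSampleRate_\symIndexJ$ already derived in the proof of \cref{the:unbiased}, form $\symExpectation\nolimits_\symRandom(\symNewQuantityEstimator^2)=\sum_{\symIndexI=0}^{\symNumSampleRates-1}(\symSampleRate_{\symIndexI+1}-\symSampleRate_\symIndexI)(\sum_{\symIndexJ=\symIndexI+1}^{\symNumSampleRates}(\symQuantity_{\symIndexJ-1}-\symQuantity_\symIndexJ)/\symSampleRate_\symIndexJ)^2$, expand the square, swap the order of summation and telescope, and subtract $\symQuantity(\symSetOfSpans)^2$ using unbiasedness; this works but the bookkeeping in expanding the square is considerably more error-prone. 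Some care is also needed with the boundary conventions ($\symSampleRate_0=0$, $\symQuantity_\symNumSampleRates=0$) and with the fully-sampled ($\symIndexI=0$) and empty-sample cases, but these are handled exactly as in the proof of \cref{the:unbiased}.
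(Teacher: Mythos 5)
Your proposal is correct, and it takes a genuinely different route from the paper. The paper proves \cref{the:variance} by brute force: it computes the second moment $\symExpectation\nolimits_\symRandom((\symNewQuantityEstimator(\symSetOfSampledSpans))^2)=\sum_{\symIndexI=0}^{\symNumSampleRates-1}(\symSampleRate_{\symIndexI+1}-\symSampleRate_{\symIndexI})\bigl(\sum_{\symIndexJ=\symIndexI+1}^{\symNumSampleRates}(\symQuantity_{\symIndexJ-1}-\symQuantity_{\symIndexJ})/\symSampleRate_{\symIndexJ}\bigr)^2$, expands the square into diagonal and cross terms, swaps the order of summation, and telescopes twice before subtracting $\symQuantity_0^2$ --- exactly the ``pedestrian alternative'' you sketch at the end. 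Your primary argument instead rewrites the estimator, via summation by parts, as a linear combination with deterministic coefficients $\symQuantity_0$ and $\symQuantity_0-\symQuantity_\symIndexJ$ of the increments $X_\symIndexJ/\symSampleRate_\symIndexJ-X_{\symIndexJ+1}/\symSampleRate_{\symIndexJ+1}$ and of $X_\symNumSampleRates/\symSampleRate_\symNumSampleRates$, where $X_\symIndexJ$ is the indicator of $\symRandom<\symSampleRate_\symIndexJ$; I checked that this identity holds realisation by realisation on each interval $[\symSampleRate_\symIndexI,\symSampleRate_{\symIndexI+1})$ and for the empty sample, that the covariance $\symExpectation\nolimits_\symRandom(X_\symIndexJ X_\symIndexK)=\symSampleRate_{\min(\symIndexJ,\symIndexK)}$ indeed makes these increments pairwise uncorrelated, and that their variances are $1/\symSampleRate_\symIndexJ-1/\symSampleRate_{\symIndexJ+1}$ and $1/\symSampleRate_\symNumSampleRates-1$, so the claimed formula drops out term by term. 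What your approach buys is that the variance formula becomes structurally transparent (an orthogonal decomposition of the estimator into uncorrelated contributions, which also re-proves unbiasedness for free since each increment has mean zero and $\symExpectation\nolimits_\symRandom(X_\symNumSampleRates/\symSampleRate_\symNumSampleRates)=1$); the cost is the extra setup of the indicator representation and the verification of the pointwise identity, whereas the paper's computation, though longer and more error-prone, stays entirely within the conditioning framework already established for \cref{the:unbiased}.
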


\begin{proof}
Using the same definitions as in \cref{the:unbiased}, the second moment is calculated as
\begin{align*}
&\symExpectation\nolimits_\symRandom((\symNewQuantityEstimator(\symSetOfSampledSpans))^2)
=
\\
&=
\sum_{\symIndexI=0}^{\symNumSampleRates-1}
\left(\symSampleRate_{\symIndexI+1}-\symSampleRate_{\symIndexI}\right)
\symExpectation\nolimits_\symRandom(
(\symNewQuantityEstimator(\symSetOfSampledSpans))^2
\mid
\symRandom\in[\symSampleRate_{\symIndexI},\symSampleRate_{\symIndexI+1}))
\\
&=
\sum_{\symIndexI=0}^{\symNumSampleRates-1} 
(\symSampleRate_{\symIndexI+1}-\symSampleRate_{\symIndexI}) 
\left(
\sum_{\symIndexJ=\symIndexI+1}^{\symNumSampleRates}
\frac{\symQuantity_{\symIndexJ-1}
-
\symQuantity_{\symIndexJ}
}{\symSampleRate_{\symIndexJ}}
\right)^2
\\
&=
\sum_{\symIndexI=0}^{\symNumSampleRates-1} 
\sum_{\symIndexJ=\symIndexI+1}^{\symNumSampleRates}
(\symSampleRate_{\symIndexI+1}-\symSampleRate_{\symIndexI}) 
\left(
\frac{\symQuantity_{\symIndexJ-1}
-
\symQuantity_{\symIndexJ}
}{\symSampleRate_{\symIndexJ}}
\right)^2
\\
&\qquad+
2
\sum_{\symIndexI=0}^{\symNumSampleRates-1} 
\sum_{\symIndexJ=\symIndexI+1}^{\symNumSampleRates-1}
\sum_{\symIndexK=\symIndexJ+1}^{\symNumSampleRates}
(\symSampleRate_{\symIndexI+1}-\symSampleRate_{\symIndexI}) 
\left(
\frac{\symQuantity_{\symIndexJ-1}
-
\symQuantity_{\symIndexJ}
}{\symSampleRate_{\symIndexJ}}
\right)
\left(
\frac{\symQuantity_{\symIndexK-1}
-
\symQuantity_{\symIndexK}
}{\symSampleRate_{\symIndexK}}
\right)
\\
&=
\sum_{\symIndexJ=1}^{\symNumSampleRates}
\sum_{\symIndexI=0}^{\symIndexJ-1} 
(\symSampleRate_{\symIndexI+1}-\symSampleRate_{\symIndexI}) 
\left(
\frac{\symQuantity_{\symIndexJ-1}
-
\symQuantity_{\symIndexJ}
}{\symSampleRate_{\symIndexJ}}
\right)^2
\\
&\qquad +
2
\sum_{\symIndexJ=1}^{\symNumSampleRates-1}
\sum_{\symIndexK=\symIndexJ+1}^{\symNumSampleRates}
\sum_{\symIndexI=0}^{\symIndexJ-1} 
(\symSampleRate_{\symIndexI+1}-\symSampleRate_{\symIndexI}) 
\left(
\frac{\symQuantity_{\symIndexJ-1}
-
\symQuantity_{\symIndexJ}
}{\symSampleRate_{\symIndexJ}}
\right)
\left(
\frac{\symQuantity_{\symIndexK-1}
-
\symQuantity_{\symIndexK}
}{\symSampleRate_{\symIndexK}}
\right)
\\
&=
\sum_{\symIndexJ=1}^{\symNumSampleRates}
\frac{
\left(
\symQuantity_{\symIndexJ-1}
-
\symQuantity_{\symIndexJ}
\right)^2
}{\symSampleRate_{\symIndexJ}}
+
2
\sum_{\symIndexJ=1}^{\symNumSampleRates-1}
\sum_{\symIndexK=\symIndexJ+1}^{\symNumSampleRates}
\frac{(\symQuantity_{\symIndexJ-1}
-
\symQuantity_{\symIndexJ}
)(\symQuantity_{\symIndexK-1}
-
\symQuantity_{\symIndexK})}{\symSampleRate_{\symIndexK}}
\\
&=
\sum_{\symIndexJ=1}^{\symNumSampleRates}
\frac{
\left(
\symQuantity_{\symIndexJ-1}
-
\symQuantity_{\symIndexJ}
\right)^2
}{\symSampleRate_{\symIndexJ}}
+
2
\sum_{\symIndexK=2}^{\symNumSampleRates}
\sum_{\symIndexJ=1}^{\symIndexK-1}
\frac{(\symQuantity_{\symIndexJ-1}
-
\symQuantity_{\symIndexJ}
)(\symQuantity_{\symIndexK-1}
-
\symQuantity_{\symIndexK})}{\symSampleRate_{\symIndexK}}
\\
&=
\sum_{\symIndexJ=1}^{\symNumSampleRates}
\frac{
\left(
\symQuantity_{\symIndexJ-1}
-
\symQuantity_{\symIndexJ}
\right)^2
}{\symSampleRate_{\symIndexJ}}
+
2
\sum_{\symIndexK=2}^{\symNumSampleRates}
\frac{(\symQuantity_{0}
-
\symQuantity_{\symIndexK-1}
)(\symQuantity_{\symIndexK-1}
-
\symQuantity_{\symIndexK})}{\symSampleRate_{\symIndexK}}
\\
&=
\sum_{\symIndexJ=1}^{\symNumSampleRates}
\frac{
\left(
\symQuantity_{\symIndexJ-1}
-
\symQuantity_{\symIndexJ}
\right)^2
}{\symSampleRate_{\symIndexJ}}
+
2
\frac{(\symQuantity_{0}
-
\symQuantity_{\symIndexJ-1}
)(\symQuantity_{\symIndexJ-1}
-
\symQuantity_{\symIndexJ})}{\symSampleRate_{\symIndexJ}}
\\
&=
\sum_{\symIndexJ=1}^{\symNumSampleRates}
\frac{
\symQuantity_{\symIndexJ-1}
-
\symQuantity_{\symIndexJ}
}{\symSampleRate_{\symIndexJ}}
\left(
2
\symQuantity_{0}
-
\symQuantity_{\symIndexJ-1}
-
\symQuantity_{\symIndexJ}
\right)
\\
&=
\sum_{\symIndexJ=1}^{\symNumSampleRates}
\frac{
(\symQuantity_{0}-\symQuantity_{\symIndexJ})^2
-
(\symQuantity_{0}-\symQuantity_{\symIndexJ-1})^2
}
{
\symSampleRate_{\symIndexJ}
}
\\
&=
\frac{(\symQuantity_{0}-\symQuantity_{\symNumSampleRates})^2}{\symSampleRate_{\symNumSampleRates}}
+
\left(
\sum_{\symIndexJ=1}^{\symNumSampleRates-1}
(\symQuantity_{0}-\symQuantity_{\symIndexJ})^2
\left(
\frac{1}{\symSampleRate_{\symIndexJ}}
-
\frac{1}{\symSampleRate_{\symIndexJ+1}}
\right)
\right)
-
\frac{(\symQuantity_{0}-\symQuantity_{0})^2}{\symSampleRate_{1}}
\\
&=
\frac{\symQuantity_{0}^2}{\symSampleRate_{\symNumSampleRates}}
+
\sum_{\symIndexJ=1}^{\symNumSampleRates-1}
(\symQuantity_{0}-\symQuantity_{\symIndexJ})^2
\left(
\frac{1}{\symSampleRate_{\symIndexJ}}
-
\frac{1}{\symSampleRate_{\symIndexJ+1}}
\right).
\end{align*}
Therefore, the variance can be calculated as
\begin{align*}
\symVariance\nolimits_\symRandom(\symNewQuantityEstimator(\symSetOfSampledSpans))
&= 
\symExpectation\nolimits_\symRandom((\symNewQuantityEstimator(\symSetOfSampledSpans))^2)
-
(\symExpectation\nolimits_\symRandom(\symNewQuantityEstimator(\symSetOfSampledSpans)))^2
\\
&=
\frac{\symQuantity_{0}^2}{\symSampleRate_{\symNumSampleRates}}
+
\left(
\sum_{\symIndexJ=1}^{\symNumSampleRates-1}
(\symQuantity_{0}-\symQuantity_{\symIndexJ})^2
\left(
\frac{1}{\symSampleRate_{\symIndexJ}}
-
\frac{1}{\symSampleRate_{\symIndexJ+1}}
\right)
\right)
-
\symQuantity_{0}^2
\\
&=
\symQuantity_{0}^2
\left(
\frac{1}{\symSampleRate_{\symNumSampleRates}}
-
1
\right)
+
\sum_{\symIndexJ=1}^{\symNumSampleRates-1}
(\symQuantity_{0}-\symQuantity_{\symIndexJ})^2
\left(
\frac{1}{\symSampleRate_{\symIndexJ}}
-
\frac{1}{\symSampleRate_{\symIndexJ+1}}
\right).
\end{align*}
\end{proof}

\begin{lemma}
\label{lem:comparison}
If $\symQuantity$ is bounded as defined by \eqref{equ:bounded}, hence  $\symQuantity(\symSetOfSampledSpans)\in[0, 2\symQuantity(\symSetOfSpans)]$ with $\symSetOfSampledSpans = \symDownSamplingFunction(\symSetOfSpans;\symRandom,\symSampleRateFunction)$ for all $\symRandom\in[0,1)$, we will have $\symVariance_\symRandom (\symNewQuantityEstimator(\symSetOfSampledSpans)) \leq \symVariance_\symRandom (\symNaiveQuantityEstimator(\symSetOfSampledSpans))$.
\end{lemma}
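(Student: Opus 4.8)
The plan is to combine the two closed-form variance expressions and reduce everything to an elementary term-by-term comparison. Using the shorthand $\symQuantity_\symIndexJ := \symQuantity(\symDownSamplingFunction(\symSetOfSpans;\symSampleRate_\symIndexJ,\symSampleRateFunction))$ from the proof of \cref{the:unbiased}, so that $\symQuantity_0 = \symQuantity(\symSetOfSpans)$, and noting that $\min_{\symSpan\in\symSetOfSpans}\symSampleRateFunction(\symSpan) = \symSampleRate_1$, the naive variance is $\symVariance_\symRandom(\symNaiveQuantityEstimator(\symSetOfSampledSpans)) = \symQuantity_0^2\left(\frac{1}{\symSampleRate_1}-1\right)$, while \cref{the:variance} supplies $\symVariance_\symRandom(\symNewQuantityEstimator(\symSetOfSampledSpans))$. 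After cancelling the common additive term $-\symQuantity_0^2$, the assertion is equivalent to
\[
\frac{\symQuantity_0^2}{\symSampleRate_{\symNumSampleRates}}
+
\sum_{\symIndexJ=1}^{\symNumSampleRates-1}
(\symQuantity_0-\symQuantity_\symIndexJ)^2
\left(\frac{1}{\symSampleRate_\symIndexJ}-\frac{1}{\symSampleRate_{\symIndexJ+1}}\right)
\;\le\;
\frac{\symQuantity_0^2}{\symSampleRate_1}.
\]

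First I would peel apart the right-hand side by the telescoping identity $\frac{1}{\symSampleRate_1} = \frac{1}{\symSampleRate_{\symNumSampleRates}} + \sum_{\symIndexJ=1}^{\symNumSampleRates-1}\left(\frac{1}{\symSampleRate_\symIndexJ}-\frac{1}{\symSampleRate_{\symIndexJ+1}}\right)$; this removes the $\frac{\symQuantity_0^2}{\symSampleRate_{\symNumSampleRates}}$ term and reduces the goal to
\[
\sum_{\symIndexJ=1}^{\symNumSampleRates-1}
\left((\symQuantity_0-\symQuantity_\symIndexJ)^2-\symQuantity_0^2\right)
\left(\frac{1}{\symSampleRate_\symIndexJ}-\frac{1}{\symSampleRate_{\symIndexJ+1}}\right)
\;\le\; 0.
\]
Because the distinct sampling rates satisfy $\symSampleRate_1 < \symSampleRate_2 < \dots < \symSampleRate_{\symNumSampleRates}$, each weight $\frac{1}{\symSampleRate_\symIndexJ}-\frac{1}{\symSampleRate_{\symIndexJ+1}}$ is strictly positive, so it suffices to establish $(\symQuantity_0-\symQuantity_\symIndexJ)^2 \le \symQuantity_0^2$ for every $\symIndexJ \in \{1,\dots,\symNumSampleRates-1\}$.

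For that final inequality I would invoke the boundedness hypothesis. For $\symIndexJ \le \symNumSampleRates-1$ the set $\symDownSamplingFunction(\symSetOfSpans;\symSampleRate_\symIndexJ,\symSampleRateFunction) = \lbrace\symSpan\in\symSetOfSpans : \symSampleRate_\symIndexJ < \symSampleRateFunction(\symSpan)\rbrace$ is a nonempty subset of $\symSetOfSpans$ — nonempty since at least one span carries the maximal rate $\symSampleRate_{\symNumSampleRates} > \symSampleRate_\symIndexJ$ — so \eqref{equ:bounded} applies and confines $\symQuantity_\symIndexJ$ to lie between $0$ and $2\symQuantity_0$. Consequently $\symQuantity_0 - \symQuantity_\symIndexJ$ lies between $\symQuantity_0$ and $-\symQuantity_0$, hence $|\symQuantity_0 - \symQuantity_\symIndexJ| \le |\symQuantity_0|$ and $(\symQuantity_0-\symQuantity_\symIndexJ)^2 \le \symQuantity_0^2$, which closes the argument.

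The main — and essentially only — obstacle I anticipate is the sign bookkeeping in this last step: the interval $[0,2\symQuantity_0]$ from \eqref{equ:bounded} must be read as $[2\symQuantity_0,0]$ when $\symQuantity_0 < 0$, so the passage ``$\symQuantity_\symIndexJ$ between $0$ and $2\symQuantity_0$'' $\Rightarrow$ ``$|\symQuantity_0-\symQuantity_\symIndexJ|\le|\symQuantity_0|$'' should be made precise either by a two-line case split on the sign of $\symQuantity_0$, or sign-free by observing that $\symQuantity_\symIndexJ/\symQuantity_0 \in [0,2]$ whenever $\symQuantity_0 \neq 0$ (the case $\symQuantity_0 = 0$ forcing $\symQuantity_\symIndexJ = 0$ and the inequality trivially). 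Everything else is just the telescoping identity together with positivity of the weights, so no further difficulty is expected.
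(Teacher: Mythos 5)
Your proof is correct and follows essentially the same route as the paper: both arguments reduce the claim to the term-by-term bound $(\symQuantity(\symSetOfSpans)-\symQuantity(\symDownSamplingFunction(\symSetOfSpans;\symSampleRate_{\symIndexJ},\symSampleRateFunction)))^2\leq\symQuantity(\symSetOfSpans)^2$ supplied by boundedness and then telescope the weights $\frac{1}{\symSampleRate_{\symIndexJ}}-\frac{1}{\symSampleRate_{\symIndexJ+1}}$ to recover $\symQuantity(\symSetOfSpans)^2\left(\frac{1}{\symSampleRate_1}-1\right)=\symVariance_\symRandom(\symNaiveQuantityEstimator(\symSetOfSampledSpans))$. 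Your extra care about the nonemptiness of the downsampled sets and the sign convention in \eqref{equ:bounded} is a welcome tightening of a step the paper asserts without comment.
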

\begin{proof}
Since $\symQuantity(\symSetOfSampledSpans)\in[0, 2\symQuantity(\symSetOfSpans)]$ implies $|\symQuantity(\symSetOfSampledSpans) - \symQuantity(\symSetOfSpans)|\leq |\symQuantity(\symSetOfSpans)|$, we have
\begin{align*}
&\symVariance\nolimits_\symRandom(\symNewQuantityEstimator(\symSetOfSampledSpans))=
\\
&=
\symQuantity(\symSetOfSpans)^2
\left(
\frac{1}{\symSampleRate_{\symNumSampleRates}}
-
1
\right)
+
\sum_{\symIndexJ=1}^{\symNumSampleRates-1}
(\symQuantity(\symSetOfSpans)-\symQuantity(\symDownSamplingFunction(\symSetOfSpans;\symSampleRate_{\symIndexJ},\symSampleRateFunction)))^2
\left(
\frac{1}{\symSampleRate_{\symIndexJ}}
-
\frac{1}{\symSampleRate_{\symIndexJ+1}}
\right)
\\
&
\leq
\symQuantity(\symSetOfSpans)^2
\left(
\frac{1}{\symSampleRate_{\symNumSampleRates}}
-
1
\right)
+
\sum_{\symIndexJ=1}^{\symNumSampleRates-1}
\symQuantity(\symSetOfSpans)^2
\left(
\frac{1}{\symSampleRate_{\symIndexJ}}
-
\frac{1}{\symSampleRate_{\symIndexJ+1}}
\right)
\\
&=
\symQuantity(\symSetOfSpans)^2
\left(
\frac{1}{\symSampleRate_1}
-
1
\right)
=
\symVariance\nolimits_\symRandom(\symNaiveQuantityEstimator(\symSetOfSampledSpans)).
\end{align*}
\end{proof}

\balance

\begin{lemma}
\label{the:variance_comparison}
If $\symQuantity$ is monotonic as defined by \eqref{equ:monotonic} and $\symSampleRateFunction_1(\symSpan)\leq \symSampleRateFunction_2(\symSpan)$ holds for all $\symSpan\in\symSetOfSpans$, the variance of estimator \eqref{equ:new_estimator} will satisfy $\symVariance\nolimits_\symRandom(\symNewQuantityEstimator(\symDownSamplingFunction(\symSetOfSpans; \symRandom, \symSampleRateFunction_1))) \geq \symVariance\nolimits_\symRandom(\symNewQuantityEstimator(\symDownSamplingFunction(\symSetOfSpans; \symRandom, \symSampleRateFunction_2)))$.
\end{lemma}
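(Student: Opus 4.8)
The plan is to first rewrite the variance of \cref{the:variance} in a form that does not refer to the discrete set of sampling rates of the trace, so that the cases $\symSampleRateFunction_1$ and $\symSampleRateFunction_2$, whose sets of distinct rates differ, become directly comparable. Writing $\bar{\symQuantity}_{\symSampleRateFunction}(\rho):=\symQuantity(\symDownSamplingFunction(\symSetOfSpans;\rho,\symSampleRateFunction))$ for those $\rho$ with $\symDownSamplingFunction(\symSetOfSpans;\rho,\symSampleRateFunction)\neq\emptyset$ and $\bar{\symQuantity}_{\symSampleRateFunction}(\rho):=0$ otherwise (consistent with $\symNewQuantityEstimator(\emptyset)=0$), the first step is to establish
\begin{equation*}
\symVariance\nolimits_\symRandom(\symNewQuantityEstimator(\symDownSamplingFunction(\symSetOfSpans;\symRandom,\symSampleRateFunction)))
=
\int_0^1\frac{(\symQuantity(\symSetOfSpans)-\bar{\symQuantity}_{\symSampleRateFunction}(\rho))^2}{\rho^2}\,d\rho.
\end{equation*}
This follows directly from \cref{the:variance}: with $\symSampleRate_1<\cdots<\symSampleRate_\symNumSampleRates$ the distinct sampling rates of the trace under $\symSampleRateFunction$ and $\symSampleRate_0:=0$, $\symSampleRate_{\symNumSampleRates+1}:=1$, the numerator $(\symQuantity(\symSetOfSpans)-\bar{\symQuantity}_{\symSampleRateFunction}(\rho))^2$ is constant on each interval $[\symSampleRate_\symIndexJ,\symSampleRate_{\symIndexJ+1})$, equal to $0$ for $\symIndexJ=0$, to $(\symQuantity(\symSetOfSpans)-\symQuantity(\symDownSamplingFunction(\symSetOfSpans;\symSampleRate_\symIndexJ,\symSampleRateFunction)))^2$ for $1\le\symIndexJ\le\symNumSampleRates-1$, and to $\symQuantity(\symSetOfSpans)^2$ for $\symIndexJ=\symNumSampleRates$; since $\int_{\symSampleRate_\symIndexJ}^{\symSampleRate_{\symIndexJ+1}}\rho^{-2}\,d\rho=\symSampleRate_\symIndexJ^{-1}-\symSampleRate_{\symIndexJ+1}^{-1}$ for $\symIndexJ\ge1$ while the divergent $\symIndexJ=0$ piece carries the factor $0$, summing over $\symIndexJ$ reproduces exactly the expression in \cref{the:variance}.

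With this representation the comparison reduces to a pointwise inequality between the two integrands, which I would obtain by coupling both samplings through the shared random number. Fix $\rho\in(0,1)$. Since $\symSampleRateFunction_1(\symSpan)\le\symSampleRateFunction_2(\symSpan)$ for every $\symSpan\in\symSetOfSpans$, the sampled sets are nested, $\symDownSamplingFunction(\symSetOfSpans;\rho,\symSampleRateFunction_1)=\lbrace\symSpan:\rho<\symSampleRateFunction_1(\symSpan)\rbrace\subseteq\lbrace\symSpan:\rho<\symSampleRateFunction_2(\symSpan)\rbrace=\symDownSamplingFunction(\symSetOfSpans;\rho,\symSampleRateFunction_2)\subseteq\symSetOfSpans$. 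Monotonicity \eqref{equ:monotonic}, which in particular forces $\symQuantity$ to be nonnegative on every nonempty subset of $\symSetOfSpans$ and is compatible with the convention $\bar{\symQuantity}_{\symSampleRateFunction}=0$ on empty samples, then yields $0\le\bar{\symQuantity}_{\symSampleRateFunction_1}(\rho)\le\bar{\symQuantity}_{\symSampleRateFunction_2}(\rho)\le\symQuantity(\symSetOfSpans)$; the subcase in which $\symDownSamplingFunction(\symSetOfSpans;\rho,\symSampleRateFunction_1)$, or both sets, is empty is covered by the convention. Hence $0\le\symQuantity(\symSetOfSpans)-\bar{\symQuantity}_{\symSampleRateFunction_2}(\rho)\le\symQuantity(\symSetOfSpans)-\bar{\symQuantity}_{\symSampleRateFunction_1}(\rho)$, and squaring preserves this order, so $(\symQuantity(\symSetOfSpans)-\bar{\symQuantity}_{\symSampleRateFunction_1}(\rho))^2\ge(\symQuantity(\symSetOfSpans)-\bar{\symQuantity}_{\symSampleRateFunction_2}(\rho))^2$ for all $\rho\in(0,1)$. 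Integrating against the positive weight $\rho^{-2}$ and applying the identity of the first step gives $\symVariance\nolimits_\symRandom(\symNewQuantityEstimator(\symDownSamplingFunction(\symSetOfSpans;\symRandom,\symSampleRateFunction_1)))\ge\symVariance\nolimits_\symRandom(\symNewQuantityEstimator(\symDownSamplingFunction(\symSetOfSpans;\symRandom,\symSampleRateFunction_2)))$.

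I expect the only delicate point to be the integral identity in the first step: getting the boundary behaviour right at $\rho\to0$, where $\rho^{-2}$ diverges but is multiplied by zero because the whole trace is always sampled there, and at $\rho\to1$, where the convention $\bar{\symQuantity}_{\symSampleRateFunction}=0$ on the empty sample must be matched against the isolated term $\symQuantity(\symSetOfSpans)^2(\symSampleRate_\symNumSampleRates^{-1}-1)$ of \cref{the:variance}. Everything afterwards is the short monotonicity argument above, and it is precisely the integral form that removes the need for a term-by-term comparison of the two finite sums, whose index sets of distinct sampling rates need not coincide for $\symSampleRateFunction_1$ and $\symSampleRateFunction_2$.
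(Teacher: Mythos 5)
Your proposal is correct and takes essentially the same approach as the paper's proof: both rewrite the variance of \cref{the:variance} as the integral $\int_0^1 \left(\symQuantity(\symSetOfSpans)-\symQuantity(\symDownSamplingFunction(\symSetOfSpans;\symZ,\symSampleRateFunction))\right)^2\symZ^{-2}\,d\symZ$ with the convention $\symQuantity(\emptyset):=0$, and then compare the integrands pointwise using the nesting $\symDownSamplingFunction(\symSetOfSpans;\symZ,\symSampleRateFunction_1)\subseteq\symDownSamplingFunction(\symSetOfSpans;\symZ,\symSampleRateFunction_2)\subseteq\symSetOfSpans$ together with monotonicity. The only (immaterial) difference is that the paper phrases the pointwise step sign-agnostically via two ordered chains of inequalities, whereas you assert nonnegativity of $\symQuantity$, which silently excludes the case $\symQuantity(\symSetOfSpans)<0$ permitted by the interval convention of \eqref{equ:monotonic}; the absolute-value inequality needed before squaring holds in either case.
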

\begin{proof}
Without breaking the monotonicity, we can define $\symQuantity(\emptyset):=0$ to write the variance given in \cref{the:variance} as
\begin{equation*}
\symVariance\nolimits_\symRandom(\symNewQuantityEstimator(\symSetOfSampledSpans))
=
\int_0^1 \frac{
\left( 
\symQuantity(\symSetOfSpans)-\symQuantity(\symDownSamplingFunction(\symSetOfSpans; \symZ, \symSampleRateFunction))
\right)^2
}{
\symZ^2
}
d\symZ.
\end{equation*}
This can be easily verified, when using that $\symDownSamplingFunction(\symSetOfSpans; \symZ, \symSampleRateFunction)$ is piecewise constant within each interval $\symZ\in[\symSampleRate_{\symIndexJ},\symSampleRate_{\symIndexJ+1})$.
$\symSampleRateFunction_1(\symSpan)\leq \symSampleRateFunction_2(\symSpan)$ implies 
$\symDownSamplingFunction(\symSetOfSpans; \symZ, \symSampleRateFunction_1)\subseteq\symDownSamplingFunction(\symSetOfSpans; \symZ, \symSampleRateFunction_2)\subseteq\symSetOfSpans$ for all $\symZ\in[0,1)$. Due to the monotonicity we either have
$\symQuantity(\symDownSamplingFunction(\symSetOfSpans; \symZ, \symSampleRateFunction_1))\leq \symQuantity(\symDownSamplingFunction(\symSetOfSpans; \symZ, \symSampleRateFunction_2))\leq\symQuantity(\symSetOfSpans)$ or $\symQuantity(\symDownSamplingFunction(\symSetOfSpans; \symZ, \symSampleRateFunction_1))\geq \symQuantity(\symDownSamplingFunction(\symSetOfSpans; \symZ, \symSampleRateFunction_2))\geq\symQuantity(\symSetOfSpans)$
and therefore 
$|\symQuantity(\symSetOfSpans)-\symQuantity(\symDownSamplingFunction(\symSetOfSpans; \symZ, \symSampleRateFunction_1))|
\geq
|\symQuantity(\symSetOfSpans)-\symQuantity(\symDownSamplingFunction(\symSetOfSpans; \symZ, \symSampleRateFunction_2))|
\geq
0
$. Therefore, we immediately see that
\begin{align*}
&\symVariance\nolimits_\symRandom(\symNewQuantityEstimator(\symDownSamplingFunction(\symSetOfSpans; \symRandom, \symSampleRateFunction_1)))
=
\int_0^1 \frac{
\left( 
\symQuantity(\symSetOfSpans)-\symQuantity(\symDownSamplingFunction(\symSetOfSpans; \symZ, \symSampleRateFunction_1))
\right)^2
}{
\symZ^2
}
d\symZ
\\
&\geq
\int_0^1 \frac{
\left( 
\symQuantity(\symSetOfSpans)-\symQuantity(\symDownSamplingFunction(\symSetOfSpans; \symZ, \symSampleRateFunction_2))
\right)^2
}{
\symZ^2
}
d\symZ
=
\symVariance\nolimits_\symRandom(\symNewQuantityEstimator(\symDownSamplingFunction(\symSetOfSpans; \symRandom, \symSampleRateFunction_2))).
\end{align*}
\end{proof}

\begin{lemma}
\label{cor:span_extrapolation}
When estimating the number of matching spans from a sampled trace, $\symQuantity(\symSetOfSpans) = |\symSetOfSpans \cap \symOtherSet|$, where $\symOtherSet$ denotes the universe of matching spans, 
the estimator \eqref{equ:new_estimator} reduces to
\begin{equation*}
\symNewQuantityEstimator(\symSetOfSampledSpans)=
\sum_{\symSpan\in\symSetOfSampledSpans\cap\symOtherSet}
\frac{1}{\symSampleRateFunction(\symSpan)}.
\end{equation*}
\end{lemma}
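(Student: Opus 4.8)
The plan is to use the fact that the counting quantity $\symQuantity(\symSetOfSpans)=|\symSetOfSpans\cap\symOtherSet|$ decomposes additively into per-span indicators, together with the observation that the estimator \eqref{equ:new_estimator} is \emph{linear} in $\symQuantity$. First I would note that the first form in \eqref{equ:new_estimator} writes $\symNewQuantityEstimator(\symSetOfSampledSpans)$ as a fixed linear combination of the numbers $\symQuantity(\symSetOfSampledSpans)$ and $\symQuantity(\symDownSamplingFunction(\symSetOfSampledSpans;\symSampleRateInSample_\symIndexJ,\symSampleRateFunction))$, whose coefficients ($\tfrac{1}{\symSampleRateInSample_1}$ and $\tfrac{1}{\symSampleRateInSample_\symIndexJ}-\tfrac{1}{\symSampleRateInSample_{\symIndexJ+1}}$) depend only on $\symSetOfSampledSpans$ and $\symSampleRateFunction$, not on $\symQuantity$; hence $\symQuantity\mapsto\symNewQuantityEstimator(\symSetOfSampledSpans)$ is a linear functional. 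Since for every subset $\symSetOfSampledSpans\subseteq\symSetOfSpans$ we have $|\symSetOfSampledSpans\cap\symOtherSet|=\sum_{\symSpan\in\symSetOfSpans\cap\symOtherSet}\chi_\symSpan(\symSetOfSampledSpans)$ with the atomic quantity $\chi_\symSpan(\symSetOfSampledSpans):=1$ if $\symSpan\in\symSetOfSampledSpans$ and $0$ otherwise, it suffices to evaluate the estimator for a single $\chi_\symSpan$ and then sum the results over $\symSpan\in\symSetOfSpans\cap\symOtherSet$; because $\chi_\symSpan$ contributes nothing unless $\symSpan\in\symSetOfSampledSpans$, the sum collapses to one over $\symSetOfSampledSpans\cap\symOtherSet$.

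Next I would evaluate $\symNewQuantityEstimator$ for a fixed atomic quantity $\chi_\symSpan$ by splitting into two cases. If $\symSpan\notin\symSetOfSampledSpans$, then $\symSpan$ lies in no subset $\symDownSamplingFunction(\symSetOfSampledSpans;\symSampleRateInSample_\symIndexJ,\symSampleRateFunction)\subseteq\symSetOfSampledSpans$, so every $\chi_\symSpan$-value in \eqref{equ:new_estimator} vanishes and $\symNewQuantityEstimator(\symSetOfSampledSpans)=0$. If $\symSpan\in\symSetOfSampledSpans$, its rate equals one of the distinct sample rates, say $\symSampleRateFunction(\symSpan)=\symSampleRateInSample_\symIndexK$ with $\symIndexK\in\lbrace1,\dots,\symNumSampleRatesInSample\rbrace$. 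Because $\symDownSamplingFunction(\symSetOfSampledSpans;\symSampleRateInSample_\symIndexJ,\symSampleRateFunction)=\lbrace\symSpan'\in\symSetOfSampledSpans:\symSampleRateInSample_\symIndexJ<\symSampleRateFunction(\symSpan')\rbrace$ and the $\symSampleRateInSample_\symIndexJ$ are strictly increasing (with $\symSampleRateInSample_0=0<\symSampleRateInSample_1$), the span $\symSpan$ belongs to $\symDownSamplingFunction(\symSetOfSampledSpans;\symSampleRateInSample_\symIndexJ,\symSampleRateFunction)$ exactly for $\symIndexJ<\symIndexK$, while $\chi_\symSpan(\symSetOfSampledSpans)=1$. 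Substituting into the first form of \eqref{equ:new_estimator} and using $\symIndexK\le\symNumSampleRatesInSample$ gives
\[
\symNewQuantityEstimator(\symSetOfSampledSpans)
=\frac{1}{\symSampleRateInSample_1}-\sum_{\symIndexJ=1}^{\symIndexK-1}\left(\frac{1}{\symSampleRateInSample_\symIndexJ}-\frac{1}{\symSampleRateInSample_{\symIndexJ+1}}\right)
=\frac{1}{\symSampleRateInSample_\symIndexK}
=\frac{1}{\symSampleRateFunction(\symSpan)},
\]
by telescoping, the identity remaining valid when $\symIndexK=1$ (empty sum). Summing over $\symSpan\in\symSetOfSpans\cap\symOtherSet$ and using $\symSetOfSampledSpans\subseteq\symSetOfSpans$ yields $\symNewQuantityEstimator(\symSetOfSampledSpans)=\sum_{\symSpan\in\symSetOfSampledSpans\cap\symOtherSet}\tfrac{1}{\symSampleRateFunction(\symSpan)}$.

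I do not expect a genuine obstacle; the only points needing care are the bookkeeping details. One should justify explicitly that $\symNewQuantityEstimator$ is linear in $\symQuantity$ so the per-span decomposition is legitimate (immediate from inspecting \eqref{equ:new_estimator}), check that $\symDownSamplingFunction(\symSetOfSampledSpans;\symSampleRateInSample_\symIndexJ,\symSampleRateFunction)$ is nonempty for $\symIndexJ\le\symNumSampleRatesInSample-1$ — it always contains a span of rate $\symSampleRateInSample_\symNumSampleRatesInSample$, so $\symQuantity$ is only evaluated where it is defined — and handle the boundary case $\symIndexK=1$ of the telescoping sum. An alternative avoiding the linearity step would be to insert $\symQuantity(\symDownSamplingFunction(\symSetOfSampledSpans;\symSampleRateInSample_\symIndexJ,\symSampleRateFunction))=|\lbrace\symSpan\in\symSetOfSampledSpans\cap\symOtherSet:\symSampleRateFunction(\symSpan)>\symSampleRateInSample_\symIndexJ\rbrace|$ directly into \eqref{equ:new_estimator}, group the spans of $\symSetOfSampledSpans\cap\symOtherSet$ by sampling rate, and telescope; the linearity route simply keeps the algebra lighter.
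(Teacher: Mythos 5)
Your proof is correct. It takes a mildly different route from the paper's: the paper substitutes $\symQuantity = |\cdot\cap\symOtherSet|$ directly into the third (telescoping-difference) form of \eqref{equ:new_estimator}, rewrites each difference of cardinalities as the cardinality of the set difference $\symDownSamplingFunction(\symSetOfSampledSpans\cap\symOtherSet;\symSampleRateInSample_{\symIndexJ-1},\symSampleRateFunction)\setminus\symDownSamplingFunction(\symSetOfSampledSpans\cap\symOtherSet;\symSampleRateInSample_{\symIndexJ},\symSampleRateFunction)$, identifies that with the matching spans whose sampling rate is exactly $\symSampleRateInSample_{\symIndexJ}$, and regroups the sum over rates into a sum over spans. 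You instead observe that \eqref{equ:new_estimator} is linear in $\symQuantity$, decompose the counting quantity into atomic indicators $\chi_{\symSpan}$, and telescope the first form of the estimator for each span separately. Both arguments exploit exactly the same two facts (additivity of the cardinality over spans and the telescoping coefficient structure) and are of comparable length; your linearity-first organization is slightly more reusable, since it applies verbatim to any quantity that decomposes additively over spans (e.g.\ weighted span sums), whereas the paper's direct set-algebra computation avoids introducing auxiliary quantity functions. Your explicit checks of the boundary issues --- that $\symQuantity$ is only evaluated on nonempty downsampled sets, and that the telescoping sum is empty when $\symIndexK=1$ --- are correct and cover what the paper's proof relies on implicitly.
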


\begin{proof}
Using $\symQuantity(\symSetOfSpans) = |\symSetOfSpans \cap \symOtherSet|$ which implies $\symQuantity(\symDownSamplingFunction(\symSetOfSampledSpans;\symSampleRateInSample_{\symNumSampleRatesInSample},\symSampleRateFunction))=\symQuantity(\emptyset)=0$, we can write \eqref{equ:new_estimator} as
\begin{align*}
\symNewQuantityEstimator(\symSetOfSampledSpans)
&=
\sum_{\symIndexJ=1}^{\symNumSampleRatesInSample}
\frac{\symQuantity(\symDownSamplingFunction(\symSetOfSampledSpans;\symSampleRateInSample_{\symIndexJ-1},\symSampleRateFunction))
-
\symQuantity(\symDownSamplingFunction(\symSetOfSampledSpans;\symSampleRateInSample_{\symIndexJ},\symSampleRateFunction))
}{\symSampleRateInSample_{\symIndexJ}}
\\
&=
\sum_{\symIndexJ=1}^{\symNumSampleRatesInSample}
\frac{
|\symDownSamplingFunction(\symSetOfSampledSpans;\symSampleRateInSample_{\symIndexJ-1},\symSampleRateFunction)\cap\symOtherSet|
-
|\symDownSamplingFunction(\symSetOfSampledSpans;\symSampleRateInSample_{\symIndexJ},\symSampleRateFunction)\cap\symOtherSet|
}{\symSampleRateInSample_{\symIndexJ}}
\\
&=
\sum_{\symIndexJ=1}^{\symNumSampleRatesInSample}
\frac{
|\symDownSamplingFunction(\symSetOfSampledSpans\cap\symOtherSet;\symSampleRateInSample_{\symIndexJ-1},\symSampleRateFunction)\setminus
\symDownSamplingFunction(\symSetOfSampledSpans\cap\symOtherSet;\symSampleRateInSample_{\symIndexJ},\symSampleRateFunction)
|
}{\symSampleRateInSample_{\symIndexJ}}
\\
&=
\sum_{\symIndexJ=1}^{\symNumSampleRatesInSample}
\frac{
|\lbrace\symSpan\in\symSetOfSampledSpans\cap\symOtherSet:\symSampleRateInSample_{\symIndexJ-1} < \symSampleRateFunction(\symSpan)\leq\symSampleRateInSample_{\symIndexJ}\rbrace|
}{\symSampleRateInSample_{\symIndexJ}}
\\
&=
\sum_{\symIndexJ=1}^{\symNumSampleRatesInSample}
\frac{
|\lbrace\symSpan\in\symSetOfSampledSpans\cap\symOtherSet:\symSampleRateFunction(\symSpan)=\symSampleRateInSample_{\symIndexJ}\rbrace|
}{\symSampleRateInSample_{\symIndexJ}}
=
\sum_{\symSpan\in\symSetOfSampledSpans\cap\symOtherSet}
\frac{1}{\symSampleRateFunction(\symSpan)}.
\end{align*}
\end{proof}

\begin{lemma}
\label{cor:indicator_estimation}
If $\symQuantity$ is a monotonically increasing indicator function, which means that $\symQuantity\in\lbrace 0,1\rbrace$, estimator \eqref{equ:new_estimator} simplifies to 
\begin{equation*}
\symNewQuantityEstimator(\symSetOfSampledSpans)
=
\begin{cases}
\frac{1}{\symSampleRateInSample_\symIndexK} & \text{if}\ \symQuantity(\symSetOfSampledSpans) = 1\\
0 & \text{if}\ \symQuantity(\symSetOfSampledSpans) = 0
\end{cases}
\end{equation*}
with  $\symIndexK:=\max\lbrace \symIndexJ\in\lbrace 1,2,\ldots,\symNumSampleRatesInSample\rbrace : \symQuantity(\symDownSamplingFunction(\symSetOfSampledSpans;\symSampleRateInSample_{\symIndexJ-1},\symSampleRateFunction))=1 \rbrace$ and $\symSampleRateInSample_0:=0$.
\end{lemma}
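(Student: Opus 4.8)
The plan is to read the result off directly from the third (telescoping) form of \eqref{equ:new_estimator}, using that along the descending chain of down-sampled sets the indicator $\symQuantity$ is nonincreasing and hence behaves like a step function. First I would fix the abbreviations $\symQuantity_\symIndexJ := \symQuantity(\symDownSamplingFunction(\symSetOfSampledSpans;\symSampleRateInSample_\symIndexJ,\symSampleRateFunction))$ for $0 \le \symIndexJ \le \symNumSampleRatesInSample$, with $\symSampleRateInSample_0 := 0$ so that $\symQuantity_0 = \symQuantity(\symSetOfSampledSpans)$, and set $\symQuantity(\emptyset) := 0$ so that $\symQuantity_\symNumSampleRatesInSample = 0$ (this extension keeps $\symQuantity$ monotonic, as already observed in the proof of \cref{the:variance_comparison}). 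With this notation the third line of \eqref{equ:new_estimator} becomes the compact sum $\symNewQuantityEstimator(\symSetOfSampledSpans) = \sum_{\symIndexJ=1}^{\symNumSampleRatesInSample} (\symQuantity_{\symIndexJ-1} - \symQuantity_\symIndexJ)/\symSampleRateInSample_\symIndexJ$, where the isolated leading term $\symQuantity_{\symNumSampleRatesInSample-1}/\symSampleRateInSample_\symNumSampleRatesInSample$ has been folded in using $\symQuantity_\symNumSampleRatesInSample = 0$.

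Next I would record the structural fact that makes everything collapse: since $\symSampleRateInSample_0 < \symSampleRateInSample_1 < \cdots < \symSampleRateInSample_\symNumSampleRatesInSample$, the down-sampled sets form a descending chain $\symSetOfSampledSpans = \symDownSamplingFunction(\symSetOfSampledSpans;\symSampleRateInSample_0,\symSampleRateFunction) \supseteq \symDownSamplingFunction(\symSetOfSampledSpans;\symSampleRateInSample_1,\symSampleRateFunction) \supseteq \cdots \supseteq \symDownSamplingFunction(\symSetOfSampledSpans;\symSampleRateInSample_\symNumSampleRatesInSample,\symSampleRateFunction) = \emptyset$, so monotonicity \eqref{equ:monotonic} yields $1 \ge \symQuantity_0 \ge \symQuantity_1 \ge \cdots \ge \symQuantity_\symNumSampleRatesInSample = 0$. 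Because every $\symQuantity_\symIndexJ \in \{0,1\}$, this chain of values drops from $1$ to $0$ exactly once, or is identically $0$.

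Then I would split into the two cases of the statement. If $\symQuantity(\symSetOfSampledSpans) = \symQuantity_0 = 0$, monotonicity forces $\symQuantity_\symIndexJ = 0$ for all $\symIndexJ$, so every summand vanishes and $\symNewQuantityEstimator(\symSetOfSampledSpans) = 0$; one also sees this immediately from the first form of \eqref{equ:new_estimator}. If $\symQuantity(\symSetOfSampledSpans) = \symQuantity_0 = 1$, the drop index is well defined and coincides with the $\symIndexK$ of the statement: $\symQuantity_{\symIndexJ-1} = 1$ precisely for $\symIndexJ \le \symIndexK$, hence $\symIndexK = \max\{\symIndexJ : \symQuantity_{\symIndexJ-1} = 1\}$, and the set in that definition is nonempty because $\symIndexJ = 1$ always qualifies (since $\symDownSamplingFunction(\symSetOfSampledSpans;\symSampleRateInSample_0,\symSampleRateFunction) = \symSetOfSampledSpans$). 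Then $\symQuantity_{\symIndexJ-1} - \symQuantity_\symIndexJ$ equals $1$ for $\symIndexJ = \symIndexK$ and $0$ otherwise, so the sum collapses to $\symNewQuantityEstimator(\symSetOfSampledSpans) = 1/\symSampleRateInSample_\symIndexK$, as claimed.

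I do not anticipate a genuine obstacle; the argument is essentially bookkeeping. The points that need care are the index shift --- the abbreviation $\symQuantity_\symIndexJ$ is indexed by the rate $\symSampleRateInSample_\symIndexJ$ while the statement's $\symIndexK$ is phrased through $\symSampleRateInSample_{\symIndexJ-1}$ --- together with checking that the compact sum is genuinely equivalent to \eqref{equ:new_estimator}, and verifying the small cases ($\symNumSampleRatesInSample = 1$, and $\symIndexK = \symNumSampleRatesInSample$, where the folded leading term is the only surviving contribution).
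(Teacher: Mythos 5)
Your argument is correct and is essentially the paper's own proof, just written out in more detail: both reduce to the observation that, by monotonicity, the values $\symQuantity_\symIndexJ$ along the descending chain of down-sampled sets form a $\{0,1\}$-valued nonincreasing sequence that drops exactly once (at index $\symIndexK$), so the telescoping form of \eqref{equ:new_estimator} collapses to the single term $1/\symSampleRateInSample_\symIndexK$. The extra bookkeeping you supply (folding the leading term via $\symQuantity(\emptyset):=0$, nonemptiness of the set defining $\symIndexK$) is exactly what the paper leaves implicit.
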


\begin{proof}
The case $\symQuantity(\symSetOfSampledSpans) = 0$ is trivial. 
If $\symQuantity(\symSetOfSampledSpans) = \symQuantity(\symDownSamplingFunction(\symSetOfSampledSpans;\symSampleRateInSample_{0},\symSampleRateFunction)) = 1$, we will have 
\begin{equation*}
\symQuantity(\symDownSamplingFunction(\symSetOfSampledSpans;\symSampleRateInSample_{\symIndexJ},\symSampleRateFunction))
=
\begin{cases}
1 & \text{if}\  \symIndexJ<\symIndexK\\
0 & \text{if}\  \symIndexJ\geq\symIndexK\\
\end{cases} 
\end{equation*}
due to the monotonicity. Using that in \eqref{equ:new_estimator} leads to the claimed identity.
\end{proof}

\end{document}